\def\final{0}

\documentclass[11pt]{article}

\usepackage{algorithm, algorithmic}
\usepackage{amsmath, amssymb, amsthm}
\usepackage{framed}
\usepackage{fullpage}
\usepackage[margin=1.1in]{geometry}
\usepackage{verbatim}
\usepackage{color}
\usepackage{kpfonts}
\definecolor{DarkGreen}{rgb}{0.1,0.5,0.1}
\definecolor{DarkRed}{rgb}{0.6,0.2,0.2}
\definecolor{DarkBlue}{rgb}{0.2,0.2,0.6}
\usepackage[pdftex]{hyperref}
\hypersetup{
    unicode=false,				
    pdftoolbar=true,				
    pdfmenubar=true,			
    pdffitwindow=false, 		
    pdftitle={},    					
    pdfauthor={}
    pdfsubject={},				
    pdfnewwindow=true,		
    pdfkeywords={},			
    colorlinks=true,				
    linkcolor=DarkBlue,		
    citecolor=DarkBlue,	
    filecolor=DarkRed,		
    urlcolor=DarkBlue,		
}

\ifnum\final=0
\newcommand{\mynote}[1]{\marginpar{\tiny\sf #1}}
\else
\newcommand{\mynote}[1]{}
\fi
\newcommand{\jnote}[1]{}
\newcommand{\tnote}[1]{}

\newcommand{\INDSTATE}[1][1]{\STATE\hspace{#1\algorithmicindent}}

\newcommand{\pr}[2]{\underset{#1}{\mathbb{P}}\left[ #2 \right]}
\newcommand{\ex}[2]{\underset{#1}{\mathbb{E}}\left[ #2 \right]}
\newcommand{\var}[2]{\underset{#1}{\mathrm{Var}}\left[ #2 \right]}
\newcommand{\norm}[1]{\left|\left| #1 \right| \right|}

\newcommand{\pmo}{\{\pm1\}}

\newcommand{\from}{:}

\newcommand{\eps}{\varepsilon}

\newcommand{\cR}{\mathcal{R}}


\newcommand{\trace}{\mathit{Trace}}

\newtheorem{theorem}{Theorem}[section]
\newtheorem{thm}[theorem]{Theorem}

\newtheorem{lem}[theorem]{Lemma}
\newtheorem{fact}[theorem]{Fact}

\newtheorem{prop}[theorem]{Proposition}

\theoremstyle{definition}

\newtheorem{definition}[theorem]{Definition}

\title{Between Pure and Approximate Differential Privacy}
\author{Thomas Steinke\thanks{Harvard University School of Engineering and Applied Sciences.  Supported by NSF grant CCF-1116616. \newline Email: \href{mailto:tsteinke@seas.harvard.edu}{tsteinke@seas.harvard.edu}.} \and Jonathan Ullman\thanks{Columbia University Department of Computer Science.  Supported by a Junior Fellowship from the Simons Society of Fellows.  Email: \href{mailto:jullman@cs.columbia.edu}{jullman@cs.columbia.edu}.}}

\begin{document}
\maketitle

\pagenumbering{gobble}
\begin{abstract}
We show a new lower bound on the sample complexity of $(\eps, \delta)$-differentially private algorithms that accurately answer statistical queries on high-dimensional databases.  The novelty of our bound is that it depends optimally on the parameter $\delta$, which loosely corresponds to the probability that the algorithm fails to be private, and is the first to smoothly interpolate between approximate differential privacy ($\delta > 0$) and pure differential privacy ($\delta = 0$).

Specifically, we consider a database $D \in \pmo^{n \times d}$ and its \emph{one-way marginals}, which are the $d$ queries of the form ``What fraction of individual records have the $i$-th bit set to $+1$?''  We show that in order to answer all of these queries to within error $\pm \alpha$ (on average) while satisfying $(\eps, \delta)$-differential privacy, it is necessary that
$$
n \geq \Omega\left( \frac{\sqrt{d \log(1/\delta)}}{\alpha \eps} \right),
$$
which is optimal up to constant factors.   To prove our lower bound, we build on the connection between \emph{fingerprinting codes} and lower bounds in differential privacy (Bun, Ullman, and Vadhan, STOC'14).

In addition to our lower bound, we give new purely and approximately differentially private algorithms for answering arbitrary statistical queries that improve on the sample complexity of the standard Laplace and Gaussian mechanisms for achieving worst-case accuracy guarantees by a logarithmic factor.
\end{abstract}

\vfill
\newpage

\tableofcontents

\vfill
\newpage

\pagenumbering{arabic}
\section{Introduction}
The goal of privacy-preserving data analysis is to enable rich statistical analysis of a database while protecting the privacy of individuals whose data is in the database. A formal privacy guarantee is given by \emph{$(\eps, \delta)$-differential privacy}~\cite{DworkMNS06,DworkKMMN06}, which ensures that no individual's data has a significant influence on the information released about the database. The two parameters $\eps$ and $\delta$ control the level of privacy.  Very roughly, $\eps$ is an upper bound on the amount of influence an individual's record has on the information released and $\delta$ is the probability that this bound fails to hold\footnote{This intuition is actually somewhat imprecise, although it is suitable for this informal discussion.  See \cite{KasiviswanathanS08} for a more precise semantic interpretation of $(\eps, \delta)$-differential privacy.}, so the definition becomes more stringent as $\eps, \delta \rightarrow 0$.

A natural way to measure the tradeoff between privacy and utility is \emph{sample complexity}---the minimum number of records $n$ that is sufficient in order to publicly release a given set of statistics about the database, while achieving both differential privacy and statistical accuracy.  Intuitively, it's easier to achieve these two goals when $n$ is large, as each individual's data will have only a small influence on the aggregate statistics of interest.  Conversely, the sample complexity $n$ should increase as $\eps$ and $\delta$ decrease (which strengthens the privacy guarantee).  

The strongest version of differential privacy, in which $\delta = 0$, is known as \emph{pure differential privacy}.   The sample complexity of achieving pure differential privacy is well known for many settings (e.g.~\cite{HardtT10}).  The more general case where $\delta > 0$ is known as \emph{approximate differential privacy}, and is less well understood.  Recently, Bun, Ullman, and Vadhan~\cite{BunUV14} showed how to prove strong lower bounds for approximate differential privacy that are essentially optimal for $\delta \approx 1/n$, which is essentially the weakest privacy guarantee that is still meaningful.\footnote{When $\delta \geq 1/n$ there are algorithms that are intuitively not private, yet  satisfy $(0, \delta)$-differential privacy.}  

Since $\delta$ bounds the probability of a complete privacy breach, we would like $\delta$ to be very small. Thus we would like to quantify the cost (in terms of sample complexity) as $\delta \rightarrow 0$.  In this work we give lower bounds for approximately differentially private algorithms that are nearly optimal for every choice of $\delta$, and smoothly interpolate between pure and approximate differential privacy.  

Specifically, we consider algorithms that compute the \emph{one-way marginals of the database}---an extremely simple and fundamental family of queries\jnote{Probably want to include some better motivation for looking at this case.  I'm sure it's not obvious to non-experts why one-way marginals are fundamental.}\tnote{Agreed. One good motivatioin would be to talk about implications of the result. e.g. answering $O(n^2)$ queries is hard.}.  For a database $D \in \pmo^{n \times d}$, the $d$ one-way marginals are simply the mean of the bits in each of the $d$ columns.  Formally, we define
$$
\overline{D} := \frac{1}{n} \sum_{i=1}^{n} D_i \in [\pm 1]^d
$$
where $D_i \in \{\pm 1\}^d$ is the $i$-th row of $D$. A mechanism $M$ is said to be \emph{accurate} if, on input $D$, its output is ``close to'' $\overline{D}$. Accuracy may be measured in a \emph{worst-case} sense---i.e. $\norm{M(D)-\overline{D}}_\infty \leq \alpha$, meaning every one-way marginal is answered with accuracy $\alpha$---or in an \emph{average-case} sense---i.e. $\norm{M(D)-\overline{D}}_1 \leq \alpha d$, meaning the marginals are answered with average accuracy $\alpha$.

Some of the earliest results in differential privacy \cite{DinurN03, DworkN04, BlumDMN05, DworkMNS06} give a simple $(\eps, \delta)$-differentially private algorithm---the \emph{Laplace mechanism}---that computes the one-way marginals of $D \in \pmo^{n \times d}$ with average error $\alpha$ as long as \begin{equation} \label{eqn:LaplaceMechanism} n \geq O\left(\min \left\{\frac{\sqrt{d \log(1/\delta)}}{\eps \alpha}, \frac{d}{\varepsilon \alpha} \right\}\right).\end{equation}  The previous best lower bounds are $n \geq \Omega(d/\varepsilon\alpha)$ \cite{HardtT10} for pure differential privacy and $n \geq \tilde\Omega(\sqrt{d}/ \eps \alpha) $ for approximate differential privacy with $\delta = o(1/n)$ \cite{BunUV14}. \tnote{I'm not familiar with this $n \gtrsim  \log(1/\delta) / \eps \alpha$ bound.} Our main result is an optimal lower bound that combines the previous lower bounds.

\jnote{I like the statement of the theorem with $\log_d$ in place of the $\tilde{\Omega}$.}
\begin{thm}[Main Theorem] \label{thm:maininformal}
For every $\eps \leq O(1)$, every $2^{-\Omega(n)} \leq \delta \leq 1/n^{1+\Omega(1)}$ and every \linebreak $\alpha \leq 1/10$, if $M \from \pmo^{n \times d} \to [\pm 1]^d$ is $(\eps, \delta)$-differentially private and $\ex{M}{\| M(D) - \overline{D} \|_1} \leq \alpha d$, then
$$
n \geq {\Omega}\left( \frac{\sqrt{d \log(1/\delta)}}{\eps \alpha}\right).
$$
\end{thm}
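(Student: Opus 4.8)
The plan is to push the \emph{fingerprinting-code} lower-bound technique of Bun--Ullman--Vadhan to be sensitive to $\delta$, by designing the tracing adversary so that its false-positive probability is at most (roughly) $\delta$ rather than an unspecified $1/\poly(n)$. Concretely, I would analyze a correlation attack on the following hard distribution: fix a distribution $\mathcal{P}$ on $[-1,1]$ under which one-way marginals are hard to estimate privately (a Tardos-type law, roughly $\propto 1/(1-p^2)$ on a sub-interval), draw column biases $p_1,\dots,p_d$ i.i.d.\ from $\mathcal{P}$, then draw rows $D_1,\dots,D_n\in\pmo^d$ independently with $\ex{}{D_{ij}}=p_j$, and run $q=M(D)$. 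For each row $i$ form the score $Z_i=\langle D_i-p,\,g(q)-p\rangle$, where $g(\cdot)$ truncates each coordinate of $q-p$ to $[-\kappa,\kappa]$ for a parameter $\kappa$ chosen below; the truncation is what lets an \emph{average}-case accuracy guarantee do the work that a worst-case guarantee does in the simpler versions of the argument.

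\emph{Soundness, and where $\delta$ enters.} If row $i$ is not in the database, then $D_i$ is independent of $q$ given $p$, so $Z_i$ is a sum of $d$ bounded independent terms, hence sub-Gaussian with variance proxy at most $\sum_j(g(q)_j-p_j)^2\le\kappa\,\norm{g(q)-p}_1\lesssim\kappa\alpha d$. Thus $\Pr[Z_i\ge T]\le\exp(-\Omega(T^2/(\kappa\alpha d)))$, and picking the accusation threshold $T\asymp\sqrt{\kappa\alpha d\,\log(1/\delta)}$ forces a false accusation to occur with probability $\ll\delta$. This is the only place $\delta$ is used, and it is used through a Gaussian (not a Chebyshev) tail, which is what yields a $\sqrt{\log(1/\delta)}$ dependence rather than a $\log(1/\delta)$ one.

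\emph{Completeness.} Applying the ``fingerprinting lemma'' coordinate by coordinate --- for an appropriate $\mathcal{P}$, every $h\colon\pmo^n\to[-1,1]$ satisfies $\ex{}{h(x)\sum_i(x_i-p)}+O(\ex{}{\card{h(x)-p}})\ge\Omega(1)$ --- and summing over the $d$ columns, charging the truncation error and the accuracy error $\ex{}{\norm{q-\overline{D}}_1}\le\alpha d$ against the $O(\ex{}{\card{h(x)-p}})$ term (which is where $\alpha\le1/10$ is used), I would obtain $\ex{}{\sum_i Z_i}=\Omega(d)$, together with concentration showing that with constant probability $\sum_i Z_i=\Omega(d)$ while simultaneously $\norm{g(q)-p}_2^2\le O(\kappa\alpha d)$. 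On that event a counting argument gives at least $\Omega\!\big((d-nT)/(\sqrt d\cdot\sqrt{\kappa\alpha d})\big)$ rows with $Z_i\ge T$, which is $\Omega(1)$ whenever $n$ is below the target bound.

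\emph{Assembling the pieces, and the main obstacle.} Let $\mathsf{Trace}$ accuse row $i$ iff $Z_i\ge T$. Since this is a (data-dependent) post-processing of $M(D)$ with $(D_i,p,T)$ held fixed, $(\eps,\delta)$-differential privacy gives $\Pr[\mathsf{Trace}(D)=i]\le e^{\eps}\cdot(\text{false-positive rate})+\delta=O(\delta)$ for each $i$ (using soundness on the database with row $i$ resampled), so the expected number of accusations is $O(n\delta)$; for $\delta\le1/n^{1+\Omega(1)}$ this is $\ll1$, contradicting completeness once $n$ is below the claimed bound. The $1/\eps$ factor I would obtain by first reducing to the case $\eps=\Theta(1)$: replicating every row $\approx1/\eps$ times turns a single-row change of the resulting $(n\eps)$-row ``meta-database'' into a $(1/\eps)$-group change of $D$, so group privacy converts $(\eps,\delta)$ into $(O(1),O(\delta/\eps))$ with no loss of accuracy and only a constant-factor loss inside $\log(1/\delta)$. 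The hard part is carrying out completeness and soundness \emph{tightly} enough that nothing but $\log(1/\delta)$ survives: this needs a fingerprinting lemma with a usable absolute constant, a truncation level $\kappa$ calibrated so that the non-member variance shrinks to $\approx\alpha^2 d$ while the per-row signal of $\approx d/n$ is not destroyed, and concentration bounds clean enough that no spurious $\log d$ or $\log n$ creeps in --- this is exactly the gap between the previously known $\tilde\Omega(\sqrt d/(\eps\alpha))$ and the optimal $\Omega(\sqrt{d\log(1/\delta)}/(\eps\alpha))$.
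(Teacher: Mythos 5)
Your proposal takes a genuinely different route from the paper, and it has a concrete gap. The paper does \emph{not} re-analyze the fingerprinting code to be ``tighter in $\delta$''; instead, it keeps the off-the-shelf $\delta$-sound $L_1$ fingerprinting code (length $O(n^2\log(1/\delta))$, Theorem~\ref{thm:FPC}) as a black box and gets the $\sqrt{\log(1/\delta)}$ factor entirely from \emph{group differential privacy plus row replication}: given an $(\eps,\delta)$-DP mechanism $M$ on $n$ rows, one defines $M_k$ on $n_k=n/k$ rows that replicates each input row $k$ times and runs $M$, making $M_k$ roughly $(k\eps, e^{k\eps}\delta)$-DP by Fact~\ref{fact:GroupDP}. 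Choosing $k=\Theta(\log(1/(n\delta)))=\Theta(\log(1/\delta))$ keeps the FPC attack on $M_k$ viable, and the contradiction $d<O(n_k^2\log(1/\delta))=O(n^2/\log(1/\delta))$ falls out. You do invoke the same replication trick, but only to normalize $\eps$ to $\Theta(1)$; the paper's key move is to also spend it on $\delta$.

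The gap in your own route is the $p$-dependent truncation. You state a fingerprinting lemma of the form ``every $h\colon\pmo^n\to[-1,1]$ satisfies $\ex{}{h(x)\sum_i(x_i-p)}+O(\ex{}{|h(x)-p|})\ge\Omega(1)$,'' where $h$ is a function of the column alone, but you then apply it with $h=g(q)_j=p_j+\text{trunc}(q_j-p_j)$, which explicitly depends on the hidden bias $p_j$. The derivative/integration-by-parts identity underlying the fingerprinting lemma (Lemma~\ref{prop:ExpectationDerivative} and Lemma~\ref{lem:Parts}) breaks when $h$ depends on $p$: $\frac{d}{dp}\ex{x\sim p}{h_p(x)}$ picks up an extra $\ex{x\sim p}{\partial_p h_p(x)}$ term that is not controlled by the correlation $\ex{x\sim p}{h_p(x)\sum_i\phi^p(x_i)}$, and for your truncation this extra term is precisely the probability of being on the clip boundary, which is order one. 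Truncating around an observable proxy like $\overline{D}_j$ instead of $p_j$ fixes the lemma but reintroduces a dependence of $g(q)_j$ on $D_{ij}$, which contaminates the soundness Hoeffding bound with an $O(d/n)$ bias of the same order as your threshold $T$. A further (fixable, but unmentioned) issue: your soundness tail uses $\norm{g(q)-p}_1\lesssim\alpha d$, which relies on accuracy; soundness must hold for arbitrary $M$, so the tracer needs to explicitly gate on a check like $\norm{g(q)-p}_2^2\le O(\kappa\alpha d)$. In short: your high-level plan of ``rebalance the FPC with a truncation $\kappa\approx\alpha/\log(1/\delta)$'' is a plausible direction, but as written it cites a lemma it cannot legally invoke, and without the truncation the FPC analysis alone yields only $n\geq\Omega(\sqrt{d/\log(1/\delta)}/\sqrt{\alpha})$, which is weaker than the target. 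The paper avoids all of this delicacy by pushing the $\log(1/\delta)$ dependence into the group size rather than into the FPC internals.
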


More generally, this is the first result showing that the sample complexity must grow by a multiplicative factor of $\sqrt{\log(1/\delta)}$ for answering any family of queries, as opposed to an additive dependence on $\delta$.  We also remark that the assumption on the range of $\delta$ is necessary, as the Laplace mechanism gives accuracy $\alpha$ and satisfies $(\eps, 0)$-differential privacy when $n \geq O(d/\eps \alpha)$.


\subsection{Average-Case Versus Worst-Case Error}

Our lower bound holds for mechanisms with an average-case ($L_1$) error guarantee.  Thus, it also holds for algorithms that achieve worst-case ($L_\infty$) error guarantees.  The Laplace mechanism gives a matching upper bound for average-case error. In many cases worst-case error guarantees are preferrable. For worst-case error, the sample complexity of the Laplace mechanism degrades by an additional $\log d$ factor compared to \eqref{eqn:LaplaceMechanism}.

Surprisingly, this degradation is not necessary.\tnote{Not sure if "surprisingly" is the appropriate word, but I think it is somewhat surprising that indepenent noise is suboptimal in this setting.}
We present algorithms that answer every one-way marginal with $ \alpha$ accuracy and improve on the sample complexity of the Laplace mechanism by roughly a $\log d$ factor.\jnote{Should probably say something more exciting.}  These algorithms demonstrate that the widely used technique of adding independent noise to each query is suboptimal when the goal is to achieve worst-case error guarantees.

Our algorithm for pure differential privacy satisfies the following.
\begin{thm} \label{thm-intro:LinfPure}
For every $\varepsilon, \alpha > 0$,  $d \geq 1$,  and $n \geq 4d/\varepsilon\alpha $, there exists an efficient mechanism $M : \pmo^{n \times d} \to [\pm 1]^d$  that is $(\varepsilon, 0)$-differentially private and $$\forall D \in \pmo^{n \times d} ~~~~ \pr{M}{\norm{M(D)-\overline{D}}_\infty \geq \alpha} \leq (2e)^{-d}.$$
\end{thm}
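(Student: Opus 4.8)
I would prove this with the $K$-norm (exponential) mechanism specialized to the $\ell_\infty$ ball, followed by clamping. Set $\gamma = \varepsilon n/2$, draw a noise vector $Z \in \mathbb{R}^d$ from the density proportional to $\exp(-\gamma\norm{Z}_\infty)$, and output $M(D) = \Pi(\overline{D}+Z)$, where $\Pi$ is coordinatewise projection onto $[-1,1]$. This is efficiently samplable: the $\ell_\infty$-ball-symmetric density with exponential radial decay has radius $\norm{Z}_\infty$ distributed as $\mathrm{Gamma}(d,\gamma)$, and conditioned on that radius $Z$ is uniform on the corresponding scaled $\ell_\infty$-sphere, which one samples by picking a uniform facet of the cube and a uniform point on it.

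The privacy argument is the clean part. The normalizing constant of $\exp(-\gamma\norm{z}_\infty)$ is translation invariant, so before clamping $\overline D + Z$ has density $p_D(w)\propto \exp(-\gamma\norm{w-\overline D}_\infty)$ with a database-independent normalizer. For neighbouring $D,D'$ we have $\norm{\overline D - \overline{D'}}_\infty \le 2/n$, and the reverse triangle inequality for $\norm{\cdot}_\infty$ gives $p_D(w)/p_{D'}(w) = \exp\big(\gamma(\norm{w-\overline{D'}}_\infty-\norm{w-\overline D}_\infty)\big) \le \exp(2\gamma/n) = e^{\varepsilon}$, and symmetrically $\ge e^{-\varepsilon}$. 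Since clamping is data-independent post-processing, $M$ is $(\varepsilon,0)$-differentially private.

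For accuracy, since $\overline D \in [-1,1]^d$ and $\Pi$ is projection onto that box, clamping moves each coordinate of $\overline D + Z$ only toward $\overline D$, so $\norm{M(D)-\overline D}_\infty \le \norm{Z}_\infty$ pointwise; it therefore suffices to bound $\Pr[\norm{Z}_\infty \ge \alpha]$. Now $\norm{Z}_\infty \sim \mathrm{Gamma}(d,\gamma)$ has mean $d/\gamma = 2d/(\varepsilon n) \le \alpha/2$ under the hypothesis $n \ge 4d/(\varepsilon\alpha)$. Writing $\norm{Z}_\infty$ as a sum of $d$ i.i.d.\ exponentials of rate $\gamma$ and applying a Chernoff bound yields $\Pr[\norm{Z}_\infty \ge \alpha] \le (e\mu e^{-\mu})^{d}$ with $\mu := \gamma\alpha/d \ge 2$; since $\mu e^{-\mu}$ is decreasing for $\mu\ge 1$, taking the calibration so that $\mu$ exceeds a suitable absolute constant pushes this below $(2e)^{-d}$.

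The single delicate step is this last tail estimate and the bookkeeping of constants. The whole content of the theorem is that the \emph{correlated} $\ell_\infty$-noise has $\ell_\infty$-norm below $\alpha$ except with probability exponentially small in $d$, which is precisely what avoids a union bound over the $d$ marginals --- independent Laplace noise, whose $\ell_1$-sensitivity is $2d/n$, would force $n = \Omega(d^2/\varepsilon\alpha)$. The gain is geometric: $\norm{Z}_\infty$ is the \emph{radius} of an $\ell_\infty$-symmetric law, which concentrates like a $\mathrm{Gamma}(d,\cdot)$ variable rather than like a maximum of $d$ heavy-tailed coordinates. The crude Chernoff bound above already gives the stated \emph{form} of guarantee with $n = O(d/\varepsilon\alpha)$ and failure $2^{-\Omega(d)}$; squeezing out the exact constants (failure $(2e)^{-d}$ from exactly $n \ge 4d/(\varepsilon\alpha)$) is where I expect the real work, via either a sharper large-deviation bound for the incomplete Gamma function or a mildly reshaped noise distribution, and that is the step I would treat most carefully.
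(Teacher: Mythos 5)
Your proposal is exactly the paper's proof: same exponential mechanism with density $\propto \exp(-(\varepsilon/\Delta)\norm{y}_\infty)$ and $\Delta = 2/n$, same reverse-triangle-inequality privacy argument, same observation that $\norm{Y}_\infty$ is $\mathrm{Gamma}(d,\varepsilon/\Delta)$, and the same MGF/Chernoff tail bound optimized at $t = \varepsilon/\Delta - d/\alpha$. Your caution about the constants is well-founded and in fact exposes a small algebraic slip in the paper: the displayed tail bound should read $\left(\frac{\Delta d}{\varepsilon\alpha}\right)^{-d} e^{d - \alpha\varepsilon/\Delta} = (e\mu e^{-\mu})^d$ with $\mu = \varepsilon\alpha/(\Delta d)$ (as you compute), not $\left(\frac{\Delta d}{\varepsilon\alpha}\right)^{d} e^{d - \alpha\varepsilon/\Delta}$; under the hypothesis $\mu \ge 2$ the correct bound gives $(2/e)^d$ rather than $(2e)^{-d}$, so reaching the stated constant $(2e)^{-d}$ actually requires $\mu \gtrsim 4.2$, i.e., a slightly larger constant in the hypothesis on $n$. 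This is a constant-factor bookkeeping issue only; your argument correctly establishes the claimed $2^{-\Omega(d)}$ failure probability at $n = O(d/\varepsilon\alpha)$, which is the substance of the theorem.
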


And our algorithm for approximate differential privacy is as follows.
\begin{thm} \label{thm-intro:LinfApprox}
For every $\varepsilon, \delta, \alpha > 0$, $d \geq 1$, and 
$$
n \geq O\left(\frac{\sqrt{d \cdot \log(1/\delta) \cdot \log \log d}}{\eps \alpha}\right),
$$ there exists an efficient mechanism $M : \pmo^{n \times d} \to [\pm 1]^d$  that is $(\varepsilon, \delta)$-differentially private and $$\forall D \in \pmo^{n \times d} ~~~~ \pr{M}{\norm{M(D)-\overline{D}}_\infty \geq \alpha} \leq \frac{1}{d^{\omega(1)}}.$$
\end{thm}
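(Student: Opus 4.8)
The plan is to construct the $(\eps,\delta)$-private worst-case mechanism by iteratively repairing the cheap \emph{average-error} (Gaussian) mechanism: estimate all the marginals with small per-coordinate Gaussian noise, use a sparse-vector subroutine to locate the few coordinates that came out inaccurate, recurse on just those, and stitch the answers together. Concretely, I would run rounds $\ell=0,1,2,\dots$ with nested active sets $S_0 \supseteq S_1 \supseteq \cdots$, $S_0=[d]$, and a fixed polylogarithmic shrink factor $r=\polylog(d)$. In round $\ell$, writing $m=|S_\ell|$: (i) release Gaussian perturbations $\hat a$ of $\{\overline{D}_j\}_{j\in S_\ell}$ with per-coordinate deviation $\sigma$ small enough that in expectation at most $m/r$ of them err by more than $\alpha/8$; since these $m$ marginals have $L_2$-sensitivity $2\sqrt m/n$, this costs $\Theta(m/(n^2\sigma^2))=\Theta(m\log r/(n^2\alpha^2))$ in zero-concentrated differential privacy (zCDP). (ii) Feed the queries $q_j(D):=|\hat a_j-\overline{D}_j(D)|$, each of sensitivity $2/n$ once $\hat a$ is fixed, into an approximate-DP AboveThreshold (sparse vector) mechanism with threshold $\approx\alpha/2$ and ``above''-cutoff $c=2m/r$; this produces a set $S_{\ell+1}\subseteq S_\ell$ containing every coordinate on which $\hat a$ erred by more than $\alpha/2$, while certifying that on $S_\ell\setminus S_{\ell+1}$ the value $\hat a_j$ is within $\alpha$ of $\overline{D}_j$. (iii) Output $\hat a$ on $S_\ell\setminus S_{\ell+1}$ and continue with $S_{\ell+1}$; once $|S_\ell|\le\polylog(d)$, finish that last block with the pure-DP exponential mechanism of Theorem~\ref{thm-intro:LinfPure}.

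For accuracy I would induct down the rounds. The un-flagged coordinates in each round are $\alpha$-accurate by the sparse-vector guarantee (which holds except with probability $d^{-\omega(1)}$, at the cost of only a $\polylog(d)$ factor in its error). The number of coordinates of $S_\ell$ on which $\hat a$ errs by more than $\alpha/8$ is a sum of $m$ independent indicators with mean $\le m/r$, so a Chernoff bound puts it below $2m/r$ except with probability $e^{-\Omega(m/r)}$; this keeps the flagged set within the cutoff and forces $|S_{\ell+1}|\le 2m/r$, so the active sets shrink geometrically and the process ends after $T=O(\log d/\log\log d)$ rounds --- and I would stop early enough that every round has $m/r=\omega(\log d)$, making all of these failure probabilities $d^{-\omega(1)}$. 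The endgame block of $\polylog(d)$ coordinates is handled by Theorem~\ref{thm-intro:LinfPure} with failure probability $d^{-\omega(1)}$, and a union bound over the $T$ rounds gives overall failure $d^{-\omega(1)}$.

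The privacy accounting is where the $\sqrt{\log\log d}$, rather than $\sqrt{\log d}$, comes from, and it pins down the choice of $r$. The sparse-vector step in round $\ell$ costs $\Theta(c\cdot(2/n)^2\polylog(d)/\alpha^2)=\Theta(m\,\polylog(d)/(r n^2\alpha^2))$ in zCDP, so $r$ must be polylogarithmic and large enough that this is dominated by the Gaussian cost $\Theta(m\log r/(n^2\alpha^2))$ --- $r=\log^2 d$ suffices --- which in turn forces $\sigma^2=\Theta(\alpha^2/\log r)=\Theta(\alpha^2/\log\log d)$, only a $\sqrt{\log\log d}$ inflation over the per-coordinate noise of the average-error mechanism. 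Since $|S_\ell|\le d/r^\ell$, the costs telescope: $\sum_\ell |S_\ell|\le 2d$, so the whole mechanism (including the cheap exponential-mechanism endgame) is $\rho$-zCDP with $\rho=\Theta(d\log\log d/(n^2\alpha^2))$. Converting $\rho$-zCDP to $(\eps,\delta)$-DP via $\eps\approx 2\sqrt{\rho\log(1/\delta)}$ and solving for $n$ yields exactly $n=O(\sqrt{d\log(1/\delta)\log\log d}/(\eps\alpha))$.

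The main obstacle is step (ii) --- locating the bad coordinates privately and cheaply enough not to dominate the budget. The obvious alternative, taking two independent Gaussian estimates and ``finalizing'' a coordinate when they agree, fails: to make a finalized-but-inaccurate coordinate rare enough to survive a union bound over $d$ coordinates one needs per-estimate noise $\Theta(\alpha/\sqrt{\log d})$, reintroducing the $\sqrt{\log d}$ loss. Sparse vector avoids this because, in its approximate-DP form, its error scales only as $\sqrt c$ in the number $c$ of ``above'' answers (versus $c$ for pure DP), so locating $c=m/\polylog(d)$ bad coordinates out of $m$ costs a $\polylog(d)$ factor \emph{less} zCDP than the Gaussian estimate itself --- precisely where $\delta>0$ buys the improvement over Theorem~\ref{thm-intro:LinfPure}. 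I would need to check that the sparse-vector queries really have sensitivity $2/n$ (true after conditioning on the already-released $\hat a$), that the cutoff is exceeded only with probability $d^{-\omega(1)}$, and that the threshold gap makes ``not flagged'' rigorously imply ``$\alpha$-accurate''.
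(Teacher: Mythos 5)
Your proposal is correct and hinges on exactly the same key insight as the paper---add Gaussian noise calibrated so that only a $1/\polylog(d)$ fraction of coordinates are inaccurate, then exploit the fact that sparse vector's (approximate-DP) cost scales as $\sqrt{c}$ in the number $c$ of above-threshold queries, not $\sqrt{d}$---but it implements that insight through a noticeably heavier recursion than the paper uses. The paper's mechanism is a \emph{single round}: after releasing $\tilde a_j = \overline{D}_j + z_j$, it feeds the \emph{residual} queries $q_j(x) = (x_j - \tilde a_j)/2$ into the numeric-answer version of sparse vector (Theorem~\ref{thm:SV}) with $k=d$, $c = 2d/\log^8 d$, and threshold $\alpha/2$, and outputs $a_j = \tilde a_j + 2\hat a_j$. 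Since $2q_j(D) = \overline{D}_j - \tilde a_j = -z_j$, the Gaussian error is algebraically cancelled: $|a_j - \overline{D}_j| = 2|\hat a_j - q_j(D)| \le 2\alpha_{\mathit{SV}} = \alpha$ whenever sparse vector succeeds, with no recursion and no endgame mechanism. Your version instead uses sparse vector only as a \emph{flagging} AboveThreshold, throws away the Gaussian estimate on the flagged set $S_{\ell+1}$, re-estimates recursively across $T = O(\log d / \log\log d)$ rounds, and closes out a $\polylog(d)$-size tail with the pure-DP exponential mechanism of Theorem~\ref{thm-intro:LinfPure}, accounting for everything in zCDP and converting to $(\eps,\delta)$-DP at the end. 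The accounting you sketch (Gaussian cost $\Theta(m\log r/(n^2\alpha^2))$ per round, sparse-vector cost $\Theta(m\,\polylog(d)/(rn^2\alpha^2))$ dominated by choosing $r = \log^2 d$, geometric shrinkage so $\sum_\ell |S_\ell| = O(d)$, and stopping while $m/r = \omega(\log d)$ so every Chernoff failure is $d^{-\omega(1)}$) does close, and zCDP/advanced-composition is a legitimate way to realize the $\sqrt{c}$ dependence of sparse vector. So the proof works; what you lose relative to the paper is simplicity---the paper's residual-query trick collapses your entire recursion and endgame into one application of Theorem~\ref{thm:SV}, and the constant-level argument is correspondingly shorter. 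One small point worth being explicit about if you flesh this out: ``not flagged $\Rightarrow$ $\alpha$-accurate'' requires the AboveThreshold slack $\Delta$ to satisfy both $T+\Delta \le \alpha$ (so non-flagged coordinates are accurate) and $T - \Delta \ge \alpha/8$ (so that only coordinates the Chernoff bound already counted can be flagged, keeping you within the budget $c = 2m/r$), which pins $T$ and $\Delta$ to a fixed window inside $(\alpha/8,\alpha)$; you gesture at this but it's the place where a careless choice of constants could break the cutoff argument.
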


These algorithms improve over the sample complexity of the best known mechanisms for each privacy and accuracy guarantee by a factor of $(\log(d))^{\Omega(1)}$.  Namely, the Laplace mechanism requires $n \geq O(d \cdot \log d / \eps \alpha)$ samples for pure differential privacy and the Gaussian mechanism requires $n \geq O(\sqrt{d \cdot  \log(1/\delta) \cdot \log d} / \eps \alpha)$ samples for approximate differential privacy.

\begin{figure}[h]
\begin{center}
\begin{tabular}{|c|c|c||cc|c|}
\hline
Privacy&Accuracy&Type&\multicolumn{2}{c|}{Previous bound}&This work\\
\hline
\hline
$(\varepsilon,\delta)$&$L_1$ or $L_\infty$&Lower&$n = \tilde{\Omega}\left(\frac{\sqrt{d}}{\alpha\varepsilon}\right)$&\cite{BunUV14}&$n=\Omega\left(\frac{\sqrt{d \log(1/\delta)}}{\alpha\varepsilon}\right)$\\
\hline
$(\varepsilon,\delta)$&$L_1$&Upper&$n=O\left(\frac{\sqrt{d \cdot \log(1/\delta)}}{\varepsilon\alpha}\right)$&Laplace&\\
\hline
$(\varepsilon,\delta)$&$L_\infty$&Upper&$n=O\left(\frac{\sqrt{d \cdot \log(1/\delta) \cdot \log d}}{\varepsilon\alpha}\right)$&Gaussian&$n=O\left(\frac{\sqrt{d \cdot \log(1/\delta) \cdot \log \log d}}{\varepsilon\alpha}\right)$\\
\hline
\hline
$(\varepsilon,0)$&$L_1$ or $L_\infty$&Lower&$n = \Omega\left(\frac{d}{\alpha\varepsilon}\right)$&\cite{HardtT10}&\\
\hline
$(\varepsilon,0)$&$L_1$&Upper&$n=O\left(\frac{d}{\varepsilon\alpha}\right)$&Laplace&\\
\hline
$(\varepsilon,0)$&$L_\infty$&Upper&$n=O\left(\frac{d \cdot \log d}{\varepsilon\alpha}\right)$&Laplace&$n=O\left(\frac{d}{\varepsilon\alpha}\right)$\\
\hline
\end{tabular}
\caption{Summary of sample complexity upper and lower bounds for privately answering $d$ one-way marginals with accuracy $\alpha$.}
\end{center}
\end{figure}

\subsection{Techniques} \tnote{Techniques section needs work}

\paragraph{Lower Bounds:}
Our lower bound relies on a combinatorial objected called a \emph{fingerprinting code} \cite{BonehS98}. Fingerprinting codes were originally used in cryptography for watermarking digital content, but several recent works have shown they are intimately connected to lower bounds for differential privacy and related learning problems \cite{Ullman13, BunUV14, HardtU14, SteinkeU14}.  In particular, Bun et al.~\cite{BunUV14} showed that fingerprinting codes can be used to construct an attack demonstrating that any mechanism that accurately answers one-way marginals is not differentially private. Specifically, a fingerprinting code gives a distribution on individuals' data and a corresponding ``tracer'' algorithm such that, if a database is constructed from the data of a fixed subset of the individuals, then the tracer algorithm can identify at least one of the individuals in that subset given only approximate answers to the one-way marginals of the database.  Specifically, their attack shows that a mechanism that satisfies $(1, o(1/n))$-differential privacy requires $n \geq \tilde{\Omega}(\sqrt{d})$ samples to accurately compute one-way marginals.

Our proof uses a new, more general reduction from breaking fingerprinting codes to differentially private data release.  Specifically, our reduction uses \emph{group differential privacy}.  This property states that if an algorithm is $(\eps, \delta)$-differentially private with respect to the change of one individual's data, then for any $k$, it is roughly $(k \eps, e^{k\eps} \delta)$-differentially private with respect to the change of $k$ individuals' data.  Thus an $(\eps, \delta)$-differentially private algorithm provides a meaningful privacy guarantee for groups of size $k \approx \log(1/\delta)/\eps$.

To use this in our reduction, we start with a mechanism $M$ that takes a database of $n$ rows and is $(\eps, \delta)$-differentially private.  We design a mechanism $M_k$ that takes a database of $n/k$ rows, copies each of its rows $k$ times, and uses the result as input to $M$. The resulting mechanism $M_k$ is roughly $(k \eps, e^{k \eps} \delta)$-differentially private.  For our choice of $k$, these parameters will be small enough to apply the attack of \cite{BunUV14} to obtain a lower bound on the number of samples used by $M_k$, which is $n/k$.  Thus, for larger values of $k$ (equivalently, smaller values of $\delta$), we obtain a stronger lower bound.  The remainder of the proof is to quantify the parameters precisely.
\jnote{Do we want to include something about the upper bounds?}

\paragraph{Upper Bounds:} Our algorithm for pure differential privacy and worst-case error is an instantiation of the exponential mechanism \cite{McSherryT07} using the $L_\infty$ norm. That is, the mechanism samples $y \in \mathbb{R}^d$ with probability proportional to $\exp(-\eta \norm{y}_\infty)$ and outputs $M(D)=\overline{D}+y$.  In contrast, adding independent Laplace noise  corresponds to using the exponential mechanism with the $L_1$ norm and adding independent Gaussian noise corresponds to using the exponential mechanism with the $L_2$ norm squared. Using this distribution turns out to give better tail bounds than adding independent noise.

For approximate differential privacy, we use a completely different algorithm. We start by adding independent Gaussian noise to each marginal. However, rather than using a union bound to show that each Gaussian error is small with high probability, we use a Chernoff bound to show that most errors are small. Namely, with the sample complexity that we allow $M$, we can ensure that all but a $1/\mathrm{polylog}(d)$ fraction of the errors are small. Now we ``fix'' the $d/\mathrm{polylog}(d)$ marginals that are bad. The trick is that we use the sparse vector algorithm, which allows us to do indentify and fix these $d/\mathrm{polylog}(d)$ marginals with sample complexity corresponding to only $d/\mathrm{polylog}(d)$ queries, rather than $d$ queries.

\section{Preliminaries}
We define a \emph{database} $D \in \pmo^{n \times d}$ to be a matrix of $n$ rows, where each row corresponds to an individual, and each row has \emph{dimension $d$} (consists of $d$ binary attributes).  We say that two databases $D, D' \in \pmo^{n \times d}$ are \emph{adjacent} if they differ only by a single row, and we denote this by $D \sim D'$.  In particular, we can replace the $i$th row of a database $D$ with some fixed element of $\pmo^d$ to obtain another database $D_{-i} \sim D$.  
\begin{definition}[Differential Privacy~\cite{DworkMNS06}]\label{def:dp} 
Let $M \from \pmo^{n \times d} \to \cR$ be a randomized mechanism. We say that $M$ is \emph{$(\eps, \delta)$-differentially private} if for every two adjacent databases $D \sim D'$ and every subset $S \subseteq \cR$,
$$
\pr{}{M(D) \in S} \leq e^{\eps} \cdot \pr{}{M(D') \in S}+ \delta.
$$
\end{definition}

A well known fact about differential privacy is that it generalizes smoothly to databases that differ on more than a single row.  We say that two databases $D, D' \in \pmo^{n \times d}$ are \emph{$k$-adjacent} if they differ by at most $k$ rows, and we denote this by $D \sim_{k} D'$.
\jnote{I think this is very slightly wrong.  Can you double check?}
\begin{fact}[Group Differential Privacy] \label{fact:GroupDP}
For every $k \geq 1$, if $M \from \pmo^{n \times d} \to \cR$ is $(\eps, \delta)$-differentially private, then for every two $k$-adjacent databases $D \sim_{k} D'$, and every subset $S \subseteq \cR$,
$$
\pr{}{M(D) \in S} \leq e^{k \eps} \cdot \pr{}{M(D') \in S} + \frac{e^{k \eps} - 1}{e^{\eps} - 1} \cdot \delta.
$$
\end{fact}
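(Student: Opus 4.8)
The plan is to prove this by induction on $k$, reducing to $k$ applications of the basic $(\eps,\delta)$-differential privacy guarantee of Definition~\ref{def:dp}. The base case $k=1$ is exactly Definition~\ref{def:dp}, since the $\delta$-coefficient $\frac{e^{\eps}-1}{e^{\eps}-1}$ equals $1$.

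For the inductive step, I would first observe that if $D \sim_k D'$, then there is a sequence of databases $D = D^{(0)}, D^{(1)}, \dots, D^{(k)} = D'$ in $\pmo^{n \times d}$ such that consecutive databases $D^{(j-1)}$ and $D^{(j)}$ are adjacent: starting from $D$, change the (at most $k$) differing rows one at a time until we reach $D'$, padding the sequence with repeats if $D$ and $D'$ differ in fewer than $k$ rows. Then, fixing any $S \subseteq \cR$, apply Definition~\ref{def:dp} to the pair $D^{(0)} \sim D^{(1)}$, then to $D^{(1)} \sim D^{(2)}$, and so on, chaining the inequalities to obtain
$$
\pr{}{M(D) \in S} \leq e^{k\eps} \cdot \pr{}{M(D') \in S} + \delta \cdot \sum_{j=0}^{k-1} e^{j \eps},
$$
and the geometric sum $\sum_{j=0}^{k-1} e^{j\eps} = \frac{e^{k\eps} - 1}{e^{\eps} - 1}$ gives the claimed bound. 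Equivalently, one can phrase this as an explicit induction: assuming the statement for $k-1$ applied to $D^{(1)} \sim_{k-1} D'$, combine it with Definition~\ref{def:dp} for $D \sim D^{(1)}$ and simplify $e^{\eps}\cdot \frac{e^{(k-1)\eps}-1}{e^{\eps}-1} + 1 = \frac{e^{k\eps}-1}{e^{\eps}-1}$.

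I do not expect a real obstacle here; the only points requiring a little care are (i) checking that the intermediate databases $D^{(j)}$ genuinely lie in $\pmo^{n \times d}$ and are adjacent, which is immediate from the construction, and (ii) handling the case where $D$ and $D'$ differ in strictly fewer than $k$ rows. In that case the bound only gets easier, since both $e^{j\eps}$ and the coefficient $\frac{e^{j\eps}-1}{e^{\eps}-1}$ are nondecreasing in $j$, so the $k$-row bound dominates the bound one would get for the true number of differing rows. This also addresses the margin note: the coefficient $\frac{e^{k\eps}-1}{e^{\eps}-1}$ is exactly the partial geometric sum and correctly reduces to $1$ at $k=1$, so it is consistent with Definition~\ref{def:dp}.
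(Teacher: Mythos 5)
Your proof is correct and is the standard chaining argument for group differential privacy; the paper states this as a well-known fact without supplying a proof, so there is nothing to compare it against. The chaining of Definition~\ref{def:dp} along a sequence of adjacent databases, summing the accumulated $\delta$'s as a geometric series to get $\frac{e^{k\eps}-1}{e^\eps-1}$, is exactly what the coefficient in the statement encodes, and your observation about monotonicity in the number of differing rows cleanly handles the case of strict $k$-adjacency.
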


\jnote{Do you think there is a better way to state this?}

All of the upper and lower bounds for one-way marginals have a multiplicative $1/\alpha\varepsilon$ dependence on the accuracy $\alpha$ and the privacy loss $\varepsilon$. This is no coincidence - there is a generic reduction:

\begin{fact}[$\alpha$ and $\varepsilon$ dependence]\label{fact:alphaeps} Let $p \in [1,\infty]$ and $\alpha,\varepsilon, \delta \in [0,1/10]$. 

Suppose there exists a $(\varepsilon,\delta)$-differentially private mechanism $M \from \pmo^{n \times d} \to [\pm 1]^d$ such that for every database $D \in \pmo^{n \times d}$,
$$
\ex{M}{\|M(D) - \overline{D}\|_p} \leq \alpha d^{1/p}.
$$

Then there exists a $(1,\delta/\varepsilon)$-differentially private mechanism $M' \from \pmo^{n' \times d} \to [\pm 1]^d$ for $n'=\Theta(\alpha\varepsilon n)$ such that for every database $D' \in \pmo^{n' \times d}$,
$$
\ex{M'}{\|M'(D') - \overline{D'}\|_p} \leq d^{1/p}/10.
$$
\end{fact}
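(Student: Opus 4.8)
\emph{Proof plan.} The plan is to obtain $M'$ from $M$ by pre-composing with a fixed (data-independent) map that performs two dilutions: a \textbf{row-replication} step that trades the privacy parameter $\varepsilon$ down to a constant via group privacy (Fact~\ref{fact:GroupDP}), and a \textbf{padding} step that trades the accuracy parameter $\alpha$ down to a constant by mixing in fixed rows. Assume $\varepsilon \in (0,1/10]$ (the statement is vacuous otherwise) and set $k := \lfloor 1/(2\varepsilon) \rfloor \geq 5$. Let $F$ be the fixed all-$(+1)$ matrix, so every row-average of $F$ equals $\mathbf{1} := (1,\dots,1)$, and pick a fraction $\beta \in [10\alpha,1]$ together with $n' := \Theta(\beta n / k) = \Theta(\alpha\varepsilon n)$ so that $kn' \le n$ (this is possible after a trivial case split on whether $\alpha \ge 1/20$, and in the regime where $n' \ge 1$ one automatically has $n \gg k$). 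On input $D' \in \pmo^{n' \times d}$, the mechanism $M'$ forms $\tilde D \in \pmo^{n\times d}$ by writing down $k$ consecutive copies of each row of $D'$ followed by $n - kn'$ rows taken from $F$, computes $y := M(\tilde D)$, and outputs the coordinate-wise clamp of $\big(y - (1-\beta)\mathbf{1}\big)/\beta$ into $[\pm1]^d$.

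For privacy: if $D' \sim D''$ differ in a single row, then $\tilde D$ and $\tilde D''$ differ in exactly the $k$ rows that are copies of that row, with the $F$-block untouched, so $\tilde D \sim_k \tilde D''$. By Fact~\ref{fact:GroupDP} the distributions $M(\tilde D)$ and $M(\tilde D'')$ are $\big(k\varepsilon,\ \tfrac{e^{k\varepsilon}-1}{e^{\varepsilon}-1}\delta\big)$-close; since $k\varepsilon \le 1/2$ we have $e^{k\varepsilon} \le e$ and, using $e^{\varepsilon}-1 \ge \varepsilon$, the additive term is at most $(e^{1/2}-1)\delta/\varepsilon < \delta/\varepsilon$. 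Hence $D' \mapsto M(\tilde D)$ is $(1,\delta/\varepsilon)$-differentially private, and so is $M'$, being the composition of this map with the fixed post-processing $z \mapsto \mathrm{clamp}\big((z-(1-\beta)\mathbf{1})/\beta\big)$.

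For accuracy: replication leaves column averages unchanged and padding mixes in $\mathbf{1}$ with weight $1-\beta$, so $\overline{\tilde D} = \beta\,\overline{D'} + (1-\beta)\mathbf{1}$ and therefore $\big(y-(1-\beta)\mathbf{1}\big)/\beta - \overline{D'} = \big(y - \overline{\tilde D}\big)/\beta$. Because $\overline{D'}\in[\pm1]^d$, clamping each coordinate into $[-1,1]$ cannot increase the distance to $\overline{D'}$ in any $\ell_p$ norm (coordinate-wise, $|\mathrm{clamp}(x)-t|\le|x-t|$ for $t\in[-1,1]$, which also covers $p=\infty$), so $\|M'(D')-\overline{D'}\|_p \le \tfrac1\beta\|M(\tilde D)-\overline{\tilde D}\|_p$. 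Taking expectations and invoking the accuracy hypothesis for $M$ on $\tilde D$ gives $\ex{M'}{\|M'(D')-\overline{D'}\|_p} \le \tfrac{\alpha d^{1/p}}{\beta} \le \tfrac{d^{1/p}}{10}$ since $\beta\ge10\alpha$, while $n' = \Theta(\alpha\varepsilon n)$ by construction.

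I do not expect a real obstacle: the content is in choosing $k \approx 1/(2\varepsilon)$ rather than $1/\varepsilon$ so that the privacy parameter lands at exactly $1$ and the additive term at exactly $\delta/\varepsilon$, and in the one-line observation that coordinate-wise clamping to the target's bounding box is $\ell_p$-nonexpansive. The only fussy point is the bookkeeping around floors/ceilings and the feasibility of the padding fraction $\beta$, which is why the case split on $\alpha \ge 1/20$ (use replication only, since the accuracy $\alpha d^{1/p}$ is already at most $d^{1/p}/10$) versus $\alpha < 1/20$ (replication plus padding with $\beta = 10\alpha < \tfrac12$, leaving ample room for the $F$-block) is convenient.
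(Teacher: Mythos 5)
The paper states Fact~\ref{fact:alphaeps} without proof, so there is no argument in the paper to compare against; I am assessing your proof on its own. Your approach --- replicate each row $k\approx 1/(2\varepsilon)$ times so that Fact~\ref{fact:GroupDP} lands the privacy parameter at a constant below $1$ and the additive term below $\delta/\varepsilon$, pad with a fixed all-ones block to reach length $n$, undo the resulting affine distortion of the marginals, and clamp --- is the natural proof and it is essentially correct. The same replicate-and-pad device appears inside the paper's proof of Theorem~\ref{thm:Main}, except there the authors do not correct the padding bias in $\overline{D}$ but merely bound it by $2k/n$ per coordinate; your exact correction $(y-(1-\beta)\mathbf{1})/\beta$, combined with the observation that coordinate-wise clamping onto $[-1,1]$ is $\ell_p$-nonexpansive toward any target in $[\pm1]^d$, is a cleaner way to handle the same issue. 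The privacy bookkeeping ($k\varepsilon\le 1/2$, hence multiplicative factor $e^{k\varepsilon}<e$ and additive term $\frac{e^{k\varepsilon}-1}{e^{\varepsilon}-1}\delta\le\frac{e^{1/2}-1}{\varepsilon}\delta<\delta/\varepsilon$) checks out.

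One point you gesture at but should not wave away: the accuracy bound inflates by $1/\beta$ with $\beta=kn'/n$, so you need $\beta\ge 10\alpha$, and at the endpoint $\alpha=1/10$ this forces $\beta=1$, i.e.\ zero padding, which is unattainable whenever $k\nmid n$. Your fallback ``for $\alpha\ge 1/20$ use replication only'' has the same hole: since $M$ takes a fixed $n$ rows, whenever $kn'<n$ there are forced filler rows, so $\overline{\tilde D}\ne\overline{D'}$ and the correction is still needed --- the phrase ``accuracy $\alpha d^{1/p}$ is already at most $d^{1/p}/10$'' conflates accuracy relative to $\overline{\tilde D}$ with accuracy relative to $\overline{D'}$. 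All of this is fixable by a tiny amount of slack (a smaller $k$, a target constant of $1/9$, or the standing assumption that $n\gg k$ so the forced padding fraction $1-\beta$ is $O(k/n)$ and is absorbed), but as written the argument does not quite close at the boundary. The core ideas, and the $n'=\Theta(\alpha\varepsilon n)$ accounting, are right.
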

This fact allows us to suppress the accuracy parameter $\alpha$ and the privacy loss $\varepsilon$ when proving our lower bounds.
Namely, if we prove a lower bound of $n' \geq n^*$ for all $(1,\delta)$-differentially private mechanisms $M' \from \pmo^{n' \times d} \to [\pm 1]^d$ with $\ex{M'}{\|M'(D') - \overline{D'}\|_p} \leq d^{1/p}/10$, then we obtain a lower bound of $n \geq \Omega(n^*/\alpha\varepsilon)$ for all $(\varepsilon,\varepsilon\delta)$-differentially private mechanisms $M \from \pmo^{n \times d} \to [\pm 1]^d$ with $\ex{M}{\|M(D) - \overline{D}\|_p} \leq \alpha d^{1/p}$. So we will simply fix the parameters $\alpha=1/10$ and $\varepsilon=1$ in our lower bounds.

\jnote{All the theorems are very explicit about what we mean by one-way marginals, and its stated formally in the intro.  Is it worth adding something for people who want to jump right into the technical parts?}

\section{Lower Bounds for Approximate Differential Privacy}
Our main theorem can be stated as follows.
\begin{thm}[Main Theorem] \label{thm:Main}
Let $M : \pmo^{n \times d} \to [\pm 1]^d$ be a $(1, \delta)$-differentially private mechanism that answers one-way marginals such that $$\forall D \in \pmo^{n \times d}~~~\ex{M}{\norm{M(D)-\overline{D}}_1} \leq \frac{d}{10},$$ where $\overline{D}$ is the true answer vector. If $2^{-\Omega(n)} \leq \delta \leq 1/n^{1+\Omega(1)}$ and $n$ is sufficiently large, then $$d \leq O\left(\frac{n^2}{\log(1/\delta)} \right).$$
\end{thm}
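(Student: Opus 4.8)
The plan is to combine the fingerprinting-code attack of Bun--Ullman--Vadhan with group privacy (Fact~\ref{fact:GroupDP}) so as to amplify the effective privacy loss: a single $(1,\delta)$-private mechanism $M$ on $n$ rows is forced, after $k$-fold row duplication, to behave like a $(k,\,O(e^k\delta))$-private mechanism on only $n/k$ rows. Concretely, fix a parameter $k$ (to be chosen as $k=\Theta(\log(1/\delta))$) and define $M_k\from\pmo^{(n/k)\times d}\to[\pm1]^d$ by: on input $\tilde{D}$, form $D\in\pmo^{n\times d}$ by replacing each row of $\tilde{D}$ with $k$ identical copies of itself, and output $M(D)$. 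Row duplication leaves every column mean unchanged, so $\overline{D}=\overline{\tilde{D}}$ and hence $\ex{M_k}{\|M_k(\tilde{D})-\overline{\tilde{D}}\|_1}=\ex{M}{\|M(D)-\overline{D}\|_1}\le d/10$; and if $\tilde{D}\sim\tilde{D}'$ then $D\sim_k D'$, so Fact~\ref{fact:GroupDP} gives that $M_k$ is $(k,\,\tfrac{e^k-1}{e-1}\delta)$-differentially private.

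Next I would invoke the fingerprinting-code machinery of \cite{BunUV14}. A (robust) fingerprinting code of length $d$ for $m$ users with security $\xi$ --- which Tardos-type constructions provide whenever $d\ge\Omega(m^2\log(m/\xi))$ --- yields, for every mechanism $N$ on $m$-row databases with average $L_1$ error at most $d/10$, a distribution over databases $\tilde{D}\in\pmo^{m\times d}$ and a ``tracer'' $T$ satisfying: (completeness) $T(N(\tilde{D}))$ outputs some user $i\in[m]$ with probability $\ge 1-\xi$; and (soundness) the probability that $T$ accuses a user $i$ whose row has been removed (replaced or resampled) is, summed over all $i$, at most $\xi$. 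Apply this with $N=M_k$ and $m=n/k$, and write $p_i=\pr{}{T(M_k(\tilde{D}))=i}$ and $q_i=\pr{}{T(M_k(\tilde{D}_{-i}))=i}$. Then $\sum_i p_i\ge 1-\xi$, $\sum_i q_i\le\xi$, while the $(k,\tfrac{e^k-1}{e-1}\delta)$-privacy of $M_k$ gives $p_i\le e^k q_i+\tfrac{e^k-1}{e-1}\delta$ for every $i$; summing over $i$,
$$1-\xi\;\le\;e^k\sum_i q_i+m\cdot\frac{e^k-1}{e-1}\delta\;\le\;e^k\xi+m\,e^k\delta.$$

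The last step is to choose parameters so the right-hand side is bounded away from $1$. Take $\xi=c\,e^{-k}$ for a small constant $c$, so $e^k\xi=c$; and set $k=\beta\log(1/\delta)$ for a small constant $\beta\le 1/2$, so that, using $\delta\le n^{-1-\Omega(1)}$, $m\,e^k\delta=(n/k)\,\delta^{1-\beta}\le (n/k)\,n^{-(1+\Omega(1))(1-\beta)}=o(1)$ as $n\to\infty$. The inequality then becomes $1-o(1)\le c+o(1)<1$, a contradiction. Hence the fingerprinting code assumed above cannot exist, which (since code existence is monotone in length) forces $d<\Omega(m^2\log(m/\xi))=\Omega\big((n/k)^2\cdot k\big)=\Omega(n^2/k)$ to fail, i.e. $d\le O(n^2/k)=O\!\big(n^2/\log(1/\delta)\big)$. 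The hypothesis $\delta\ge 2^{-\Omega(n)}$ guarantees $k\le O(n)$ so that $m=n/k\ge 1$ is a legitimate number of users, and ``$n$ sufficiently large'' absorbs the $o(1)$ terms and integrality issues.

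The main obstacle --- and the reason the two-sided restriction $2^{-\Omega(n)}\le\delta\le n^{-1-\Omega(1)}$ is present --- is balancing two competing demands on $k$. To keep the additive $\delta$-leakage negligible after amplification we need $m\,e^k\delta=(n/k)e^k\delta=o(1)$, which caps $k$ at roughly $\log(1/(n\delta))$; but to make the lower bound strong we want $k$ as large as possible, since the fingerprinting code on $m=n/k$ users with security $\xi\approx e^{-k}$ has length $\approx m^2\log(1/\xi)\approx (n/k)^2 k=n^2/k$, which we want small. Both are satisfied by $k=\Theta(\log(1/\delta))$ precisely when $\delta$ lies in the stated range, and this is exactly what pins the bound at $d=O(n^2/\log(1/\delta))$. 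A secondary point needing care is that the fingerprinting-code attack must be invoked with a privacy loss $\eps=k$ that grows rather than $\eps=O(1)$; this is harmless because a robust fingerprinting code attains the needed security $\xi\approx e^{-k}$ at the cost of only a $\log(1/\xi)=O(k)$ factor in its length, which is already accounted for in the bound above (note $\log m\le\log n=O(k)$ throughout the allowed range).
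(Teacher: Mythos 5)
Your overall strategy is the same as the paper's: duplicate each row $k$ times to turn a $(1,\delta)$-private mechanism on $n$ rows into a $(k,\tfrac{e^k-1}{e-1}\delta)$-private mechanism on roughly $n/k$ rows via group privacy, apply a fingerprinting-code attack to the latter, and optimize $k\approx\log(1/\delta)$. Whether one sums the privacy inequality over all users (as you do) or applies pigeonhole to isolate a single accused user $i^*$ (as the paper does) is purely bookkeeping; both give the same inequality.

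There is, however, one genuine slip in how you state the fingerprinting-code guarantee. You assert that, for every mechanism $N$ with \emph{average} $L_1$ error at most $d/10$, the tracer accuses someone with probability at least $1-\xi$. This is not what the (robust, $L_1$-version) fingerprinting code gives you. Its completeness property is a statement about the \emph{joint} event
$$\pr{}{\bigl(\|N(\tilde D)-\overline{\tilde D}\|_1\le \alpha d\bigr)\;\wedge\;\bigl(\trace(N(\tilde D))=\emptyset\bigr)}\le \xi,$$
so when $N$ is only accurate \emph{in expectation} you must first pass through Markov's inequality to get accuracy with some constant probability (the paper shows $\pr{}{\|M_k(D^*)-\overline{D^*}\|_1\le d/8\}\ge 1/10$, using the triangle inequality to absorb the $\le 2kd/n\le d/100$ rounding error from padding when $k\nmid n$), and then conclude only $\pr{}{\trace\neq\emptyset}\ge \mathrm{const}-\xi$, not $1-\xi$. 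Outside the accurate event the code gives no guarantee whatsoever. Fortunately this does not break your argument: the corrected inequality $\mathrm{const}-\xi\le e^k\xi+m\,e^k\delta$ yields the same contradiction once $\xi=ce^{-k}$ for a small enough constant $c$ and $m e^k\delta=o(1)$, so the conclusion $d\le O(n^2/\log(1/\delta))$ survives. Two smaller remarks: the paper's FPC (Theorem~\ref{thm:FPC}) is stated with fixed completeness $1/100$ and per-user soundness $\delta_{\mathrm{FPC}}$, so to get your summed soundness $\sum_i q_i\le\xi$ you should instantiate it with $\delta_{\mathrm{FPC}}=\xi/m$, which is where your code length $O(m^2\log(m/\xi))$ comes from; and your "$\overline D=\overline{\tilde D}$ exactly" holds only when $k\mid n$ — the paper pads with fixed rows and explicitly bounds the resulting error, which you should do as well rather than appeal to "$n$ sufficiently large."
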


Theorem~\ref{thm:maininformal} in the introduction follows by 
rearranging terms, and applying Fact~\ref{fact:alphaeps}.  The statement above is more convenient technically, but the statement in the introduction is more consistent with the literature.

First we must introduce fingerprinting codes. The following definition is tailored to the application to privacy.  Fingerprinting codes were originally defined by Boneh and Shaw \cite{BonehS98} with a worst-case accuracy guarantee. Subsequent  works \cite{BunUV14, SteinkeU14} have altered the accuracy guarantee to an average-case one, which we use here.

\begin{definition}[$L_1$ Fingerprinting Code] \label{def:FPC}
A \emph{$\varepsilon$-complete $\delta$-sound $\alpha$-robust $L_1$ fingerprinting code for $n$ users with length $d$} is a pair of random variables $D \in \{\pm 1\}^{n \times d}$ and $\trace : [\pm 1]^d \to 2^{[n]}$ such that the following hold.
\begin{itemize}
\item[] \textbf{Completeness:} For any fixed $M : \{\pm 1\}^{n \times d} \to [\pm 1]^d$, $$\pr{}{\left(\norm{M(D)-\overline{D}}_1\leq \alpha d\right) \land \left(\trace(M(D)) = \emptyset\right) } \leq \varepsilon.$$
\jnote{We should be consistent about removing rows versus replacing rows.}
\item[] \textbf{Soundness:} For any $i \in [n]$ and fixed $M : \{\pm 1\}^{n \times d} \to [\pm 1]^d$, $$\pr{}{i \in \trace(M(D_{-i})) } \leq \delta,$$ where $D_{-i}$ denotes $D$ with the $i^\text{th}$ row replaced by some fixed element of $\pmo^d$.
\end{itemize}
\end{definition}

Fingerprinting codes with optimal length were first constructed by Tardos \cite{Tardos03} (for worst-case error) and subsequent works \cite{BunUV14,SteinkeU14} have adapted Tardos' construction to work for average-case error guarantees, which yields the following theorem. 

\jnote{We should be clear whether there is anything new here or if we're just including this for completeness (in which case we have to cite something).}
\begin{theorem} \label{thm:FPC}
For every $n \geq 1$, $\delta > 0$, and $d \geq d_{n,\delta} = O(n^2 \log(1/\delta))$, there exists a $1/100$-complete $\delta$-sound $1/8$-robust $L_1$ fingerprinting code for $n$ users with length $d$.
\end{theorem}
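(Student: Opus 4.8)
The plan is to instantiate the classical fingerprinting-code construction of Tardos~\cite{Tardos03} in the average-case (``$L_1$-robust'') form developed in~\cite{BunUV14,SteinkeU14}; fix $\alpha = 1/8$ throughout. The codebook $D$ is generated column by column: for each $j \in [d]$ draw a bias $p_j$ independently from the \emph{Tardos distribution} --- the distribution on an interval $[t, 1-t]$ with $t = \Theta(1/n)$ whose density is proportional to $1/\sqrt{p(1-p)}$ --- and then set each entry $D_{i,j} \in \pmo$ independently to $+1$ with probability $p_j$. The tracer $\trace$, which knows $\{p_j\}_j$, assigns to an answer vector $a \in [\pm1]^d$ and each user $i$ the correlation score $S_i = \sum_{j=1}^{d} a_j\, g(D_{i,j}, p_j)$ with the Tardos weights $g(+1,p) = \sqrt{(1-p)/p}$ and $g(-1,p) = -\sqrt{p/(1-p)}$ (normalized so that $\ex{}{g(D_{i,j},p_j)\mid p_j} = 0$), and outputs all users with $S_i \geq \tau$ for a threshold $\tau = \Theta(\sqrt{d\log(1/\delta)})$. (A mean-centered variant $S_i = \langle D_i - q,\, a - q\rangle$ with $q_j = 2p_j - 1$ also works and is cleaner for soundness, at some cost in the completeness analysis.)

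Soundness is the easy direction. Fix a user $i$ and an arbitrary $M \from \pmo^{n\times d}\to[\pm1]^d$. Since $D_{-i}$ is independent of the $i$-th row of $D$, conditioned on $\{p_j\}_j$ and on the value $a = M(D_{-i})$ the score $S_i$ is a sum of $d$ independent mean-zero terms, each of magnitude at most $|a_j|\cdot\max_p|g(\cdot,p)| \leq \sqrt{(1-t)/t} = O(\sqrt n)$ and conditional variance at most $a_j^2 \leq 1$, so $\var{}{S_i\mid\{p_j\}_j,\,a}\leq d$. Bernstein's inequality then gives $\pr{}{S_i\geq\tau}\leq\exp(-\Omega(\min\{\tau^2/d,\ \tau/\sqrt n\}))$, which is at most $\delta$ once $\tau = \Theta(\sqrt{d\log(1/\delta)})$ with a large enough hidden constant and $d = \Omega(n\log(1/\delta))$ --- both implied by the length bound below --- and averaging over $\{p_j\}_j$ and $a$ preserves the bound; hence the code is $\delta$-sound.

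Completeness, together with pinning down the length, is the crux and the step I expect to be the main obstacle. Assume $\norm{M(D)-\overline{D}}_1\leq\alpha d$; the goal is that $\sum_{i=1}^{n} S_i = \Omega(d)$ with probability at least $1-1/100$, since then $\max_i S_i\geq\Omega(d/n)$, which exceeds $\tau$ as soon as $d/n\geq C\sqrt{d\log(1/\delta)}$, i.e.\ $d\geq C'n^2\log(1/\delta)$ --- exactly the claimed $d_{n,\delta} = O(n^2\log(1/\delta))$. For the idealized answer $a = \overline{D}$ this is Tardos' column-sum lemma: writing $G_j = \sum_i g(D_{i,j},p_j)$, we have $\sum_i S_i = \sum_j \overline{D}_j G_j$, and since each column entry contributes $\ex{}{D_{i,j}\,g(D_{i,j},p_j)\mid p_j} = 2\sqrt{p_j(1-p_j)}$ while the cross-terms vanish in expectation, $\ex{}{\sum_j \overline{D}_j G_j\mid\{p_j\}_j} = 2\sum_j\sqrt{p_j(1-p_j)}$, whose average over the Tardos distribution is a positive constant times $d$; the lower-order fluctuations are controlled by a second-moment bound that uses $d = \Omega(n^2)$. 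The real difficulty is that $M$ need only be accurate \emph{on average}, so it may answer a constant $L_1$-fraction of the columns arbitrarily, and the crude estimate $|\sum_j(a_j-\overline{D}_j)G_j|\leq\norm{a-\overline{D}}_1\cdot\max_j|G_j|$ loses a $\sqrt n$ factor and is hopeless. The resolution --- which I would import essentially verbatim from~\cite{BunUV14,SteinkeU14} --- exploits the following: a column is damaging only when its weights $g(\pm1,p_j)$ are far from $\pm1$, i.e.\ $p_j$ is near the boundary, but on exactly those columns $M$ (which sees $D$, not $p_j$) cannot reliably estimate $G_j$ and $\overline{D}_j$ is already near $\pm1$, whereas on the moderate columns $G_j\approx n\overline{D}_j$, so the accuracy constraint itself forces $a_j G_j$ to stay positive. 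Making this quantitative --- together with an appropriate truncation of the scores so that the badly-answered columns cannot conspire against all users at once --- shows that the constant $L_1$-fraction of bad columns erodes only a constant fraction of the $\Omega(d)$ signal. Choosing the constants in $t$, $\tau$, and $\alpha = 1/8$ then drives the completeness failure probability below $1/100$ and yields $d_{n,\delta} = O(n^2\log(1/\delta))$, completing the proof.
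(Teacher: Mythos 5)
Your proposal is correct in outline but follows the classical Tardos route rather than the cleaner self-contained construction the paper gives (in its draft appendix). The two differ in three places. First, the bias distribution: you sample $p_j$ from the Tardos density $\propto 1/\sqrt{p(1-p)}$ truncated to $[t,1-t]$ with $t = \Theta(1/n)$, whereas the paper samples $p_j$ uniformly from $[0,1]$. Second, the tracer statistic: you use the raw Tardos weights $\phi^p(x)$ (so $S_i = \sum_j a_j \phi^{p_j}(D_{i,j})$), which are unbounded as $p \to \{0,1\}$ and force the truncation $t$ and a Bernstein-type tail bound for soundness; the paper instead uses $Z_i = \sum_j a_j (D_{i,j} - (2p_j - 1))$, i.e.\ the rescaled weight $x - (2p-1) = 2\sqrt{p(1-p)}\,\phi^p(x)$, which is bounded in $[-2,2]$, so plain Hoeffding gives $\delta$-soundness with $\sigma = \sqrt{8d\log(1/\delta)}$, no truncation and no Bernstein needed. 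Third --- and this is the part you flag as the ``main obstacle'' --- for completeness you import the column-sum lemma plus the robustness/truncation machinery of \cite{BunUV14,SteinkeU14} to handle the fact that $M$ is only accurate in $L_1$. The paper sidesteps all of that with a single integration-by-parts identity: for any $f\colon\pmo^n\to\mathbb{R}$, writing $g(p)=\ex{x\sim p}{f(x)}$, one has $\ex{x\sim p}{f(x)\sum_i \phi^p(x_i)} = g'(p)\sqrt{p(1-p)}$, and integrating against the \emph{uniform} prior gives $\ex{p}{g'(p)p(1-p)} = \ex{p}{g(p)(2p-1)}$. Combining these yields, per column, $\ex{}{\bigl(f(x)-(2p-1)\bigr)^2 + f(x)\sum_i (x_i-(2p-1))} \ge \ex{p}{(2p-1)^2} = 1/3$, so summing over $j$ gives $\ex{}{\|M(D)-(2p-1)\|_2^2 + Z} \ge d/3$; the accuracy hypothesis caps the first term and hence forces $Z = \Omega(d)$ in expectation, with a Markov/second-moment step converting this to a $1/100$-probability bound. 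Both routes arrive at $d = O(n^2\log(1/\delta))$, but the paper's choice of uniform $p$ plus the rescaled score is precisely what makes the integration by parts come out clean and eliminates the delicate truncation analysis you would have to reproduce; the price is that the uniform prior is less ``tight'' than Tardos', which is immaterial here since only the $O(n^2\log(1/\delta))$ rate is needed.
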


We now show how the existence of fingerprinting codes implies our lower bound.

\begin{proof}[Proof of Theorem \ref{thm:Main} from Theorem \ref{thm:FPC}]
Let $M : \pmo^{n \times d} \to [\pm 1]^d$ be a $(1,\delta)$-differentially private mechanism such that $$\forall D \in \pmo^{n \times d}~~~\ex{M}{\norm{M(D)-\overline{D}}_1} \leq \frac{d}{10}.$$ Then, by Markov's inequality, \begin{equation} \label{eqn:Markov} \forall D \in \pmo^{n \times d}~~~\pr{M}{\norm{M(D)-\overline{D}}_1 > \frac{d}{9}} \leq \frac{9}{10}. \end{equation}

Let $k$ be a parameter to be chosen later. Let $n_k = \lfloor n/k \rfloor$. Let $M_k : \pmo^{n_k \times d} \to [\pm 1]^d$ be the following mechanism. On input $D^* \in \pmo^{n_k \times d}$, $M_k$ creates $D \in \pmo^{n \times d}$ by taking $k$ copies of $D^*$ and filling the remaining entries with 1s. Then $M_k$ runs $M$ on $D$ and outputs $M(D)$.

By group privacy (Fact \ref{fact:GroupDP}), $M_k$ is a $\left(\eps_k = k, \delta_k =  \frac{e^{k}-1}{e-1}\delta\right)$-differentially private mechanism. By the triangle inequality, \begin{equation} \label{eqn:Triangle} \norm{M_k(D^*)-\overline{D^*}}_1 \leq \norm{M(D)-\overline{D}}_1 + \norm{\overline{D}-\overline{D^*}}_1.\end{equation}
Now $$\overline{D}_j = \frac{k \cdot n_k}{n} \overline{D^*_j} + \frac{n-k \cdot n_k}{n} 1.$$
Thus $$\left|\overline{D}_j - \overline{D^*_j} \right| = \left| \left(\frac{k \cdot n_k }{n}-1\right) \overline{D^*_j} + \frac{n-k \cdot n_k}{n} \right| = \frac{n-k \cdot n_k}{n} \left|1-\overline{D^*_j}\right| \leq 2 \frac{n-k \cdot n_k}{n}.$$
We have $$\frac{n-k \cdot n_k}{n} = \frac{n-k\lfloor n/k \rfloor}{n} \leq \frac{n-k(n/k-1)}{n} = \frac{k}{n}.$$ Thus $\norm{\overline{D}-\overline{D^*}}_1 \leq 2k/n$. Assume $k \leq n/200$. Thus $\norm{\overline{D}-\overline{D^*}}_1 \leq d/100$ and, by \eqref{eqn:Markov} and \eqref{eqn:Triangle}, \begin{equation}\label{eqn:MkL1} \pr{M_k}{\norm{M_k(D^*)-\overline{D^*}}_1 > \frac{d}{8}} \leq \pr{M}{\norm{M(D)-\overline{D}}_1 > \frac{d}{9}} \leq \frac{9}{10}.\end{equation}

Assume $d \geq d_{n_k,\delta}$, were $d_{n_k,\delta}=O(n_k^2 \log(1/\delta))$ is as in Theorem \ref{thm:FPC}. We will show by contradiction that this cannot be -- that is $d \leq O(n_k^2 \log(1/\delta))$. Let $D^* \in \{\pm 1\}^{n_k \times d}$ and $\trace : [\pm 1]^d \to 2^{[n_k]}$ be a $1/100$-complete $\delta$-sound $1/8$-robust $L_1$ fingerprinting code for $n_k$ users of length $d$.

By the completeness of the fingerprinting code, \begin{equation}\label{eqn:Complete100} \pr{}{\norm{M_k(D^*)-\overline{D^*}}_1 \leq \frac{d}{8} \wedge \trace(M(D)) = \emptyset } \leq \frac{1}{100}.\end{equation} Combinging \eqref{eqn:MkL1} and \eqref{eqn:Complete100}, gives $$\pr{}{\trace(M_k(D^*)) \ne \emptyset } \geq \frac{9}{100} >\frac{1}{12}.$$ In particular, there exists $i^* \in [n_k]$ such that \begin{equation} \label{eqn:Complete} \pr{}{i^* \in \trace(M_k(D^*))}  > \frac{1}{12n_k}.\end{equation}

We have that $\trace(M_k(D^*))$ is a $(\varepsilon_k, \delta_k)$-differentially private function of $D^*$, as it is only postprocessing $M_k(D^*)$. Thus \begin{equation} \label{eqn:PrivSound} \pr{}{i^* \in \trace(M_k(D^*))} \leq e^{\varepsilon_k} \pr{}{i^* \in \trace(M_k(D^*_{-i^*}))} + \delta_k \leq e^{\varepsilon_k} \delta + \delta_k,\end{equation}
where the second inequality follows from the soundness of the fingerprinting code.

Combining \eqref{eqn:Complete} and \eqref{eqn:PrivSound} gives \begin{equation} \label{eqn:Combine} \frac{1}{12n_k} \leq e^{\varepsilon_k} \delta + \delta_k = e^{k} \delta + \frac{e^{k}-1}{e-1}\delta = \frac{e^{k+1}-1}{e-1}\delta < e^{k+1} \delta.\end{equation} If $k \leq \log(1/12n_k\delta) -1$, then \eqref{eqn:Combine} gives a contradiction. Let $k = \lfloor \log(1/12n\delta) -1\rfloor$. Assuming $\delta \geq e^{-n/200}$ ensures $k \leq n/200$, as required. Assuming $\delta \leq 1/n^{1+\gamma}$ implies $k\geq \log(1/\delta)/(1+1/\gamma)-5 \geq \Omega(\log(1/\delta))$. This setting of $k$ gives a contradiction, which implies that $$d < d_{n_k,\delta} = O(n_k^2 \log(1/\delta)) = O\left(\frac{n^2}{k^2} \log(1/\delta)\right) = O\left(\frac{n^2}{\log(1/\delta)}\right),$$ as required.

\end{proof}

\section{New Mechanisms for $L_{\infty}$ Error}

Adding independent noise seems very natural for one-way marginals, but it is suboptimal if one is interested in worst-case (i.e. $L_\infty$) error bounds, rather than average-case (i.e. $L_1$) error bounds.

\subsection{Pure Differential Privacy}

Theorem \ref{thm-intro:LinfPure} follows from Theorem \ref{thm:LinfPure}. In particular, the mechanism $M : \pmo^{n \times d} \to [\pm 1]^d$ in Theorem \ref{thm-intro:LinfPure} is given by $M(D) = \overline{D}+Y$, where $Y \sim \mathcal{D}$ and $\mathcal{D}$ is the distribution from Theorem \ref{thm:LinfPure} with $\Delta=2/n$.\footnote{Note that we must truncate the output of $M$ to ensure that $M(D)$ is always in $[\pm 1]^d$.}

\begin{thm} \label{thm:LinfPure}
For all $\varepsilon > 0$, $d \geq 1$, and $\Delta > 0$, there exists a continuous distribution $\mathcal{D}$ on $\mathbb{R}^d$ with the following properties.
\begin{itemize}
\item \textbf{Privacy:} If $x,x' \in \mathbb{R}^d$ with $\norm{x-x'}_\infty \leq \Delta$, then $$\pr{Y \sim \mathcal{D}}{x+Y \in S} \leq e^{\varepsilon} \pr{Y \sim \mathcal{D}}{x'+Y \in S}$$ for all measurable $S \subseteq \mathbb{R}^d$.
\item \textbf{Accuracy:} For all $\alpha > 0$, $$\pr{Y \sim \mathcal{D}}{\norm{Y}_\infty \geq \alpha} \leq \left(\frac{\Delta d}{\varepsilon \alpha} \right)^d e^{d-\alpha\varepsilon/\Delta}.$$
In particular, if $d \leq \varepsilon\alpha/2\Delta$, then $\pr{Y \sim \mathcal{D}}{\norm{Y}_\infty \geq \alpha} \leq (2e)^{-d}$.
\item \textbf{Efficiency:} $\mathcal{D}$ can be efficiently sampled.
\end{itemize}
\end{thm}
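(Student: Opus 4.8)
\emph{Proposal.}
The plan is to take $\mathcal{D}$ to be the exponential mechanism \cite{McSherryT07} instantiated with the $\ell_\infty$ norm as (negated) score function: let $\mathcal{D}$ have density $f(y) = e^{-\eta\norm{y}_\infty}/Z$ on $\mathbb{R}^d$, where $\eta = \varepsilon/\Delta$ and $Z = \int_{\mathbb{R}^d} e^{-\eta\norm{y'}_\infty}\,dy'$.

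First I would check privacy. If $\norm{x-x'}_\infty \le \Delta$, then the density of $x+Y$ at $y$ is $f(y-x)$ and that of $x'+Y$ at $y$ is $f(y-x')$, so their ratio is $e^{-\eta(\norm{y-x}_\infty - \norm{y-x'}_\infty)} \le e^{\eta\norm{x-x'}_\infty} \le e^{\eta\Delta} = e^\varepsilon$ by the reverse triangle inequality for $\norm{\cdot}_\infty$; integrating over any measurable $S$ gives the stated guarantee. This is the only step that uses the value of $\eta$, so $\eta = \varepsilon/\Delta$ is the largest, and hence most accurate, admissible scale.

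The key structural fact driving the other two parts is that $R := \norm{Y}_\infty$ has a Gamma distribution. The $\ell_\infty$ ball of radius $r$ has volume $(2r)^d$, hence its sphere has $(d-1)$-dimensional measure $\frac{d}{dr}(2r)^d = d\,2^d r^{d-1}$; since $f$ is constant on each $\ell_\infty$ sphere, the density of $R$ is proportional to $r^{d-1}e^{-\eta r}$, so $R \sim \mathrm{Gamma}(d,\eta)$, i.e.\ $\eta R$ is distributed as a sum $E_1 + \dots + E_d$ of i.i.d.\ standard exponentials (and $Z = 2^d d!/\eta^d$). For accuracy I would then bound the tail $\pr{Y\sim\mathcal{D}}{\norm{Y}_\infty \ge \alpha} = \pr{}{E_1+\dots+E_d \ge \eta\alpha}$ at $t := \eta\alpha = \varepsilon\alpha/\Delta$ by a Chernoff / moment-generating-function argument: for $0<\lambda<1$, $\pr{}{\sum_i E_i \ge t} \le e^{-\lambda t}(1-\lambda)^{-d}$, and optimizing $\lambda$ (taking $1-\lambda = d/t$, valid once $t>d$) yields, after simplification, a tail bound matching the one claimed in the theorem, and in particular at most $(2e)^{-d}$ in the regime $d \le \varepsilon\alpha/(2\Delta)$. (Equivalently, one can bound the incomplete Gamma function $\Gamma(d,t) = \int_t^\infty u^{d-1}e^{-u}\,du$ directly by controlling its integrand.) This tail estimate is the one place that requires a genuine computation rather than a symbol push, and pinning the constants to the exact stated form is the main obstacle.

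Finally, efficiency follows from the same radial decomposition: to draw $Y \sim \mathcal{D}$, first sample $R$ as a sum of $d$ independent $\mathrm{Exp}(\eta)$ variables, then sample $Y$ uniformly from the sphere $\{y : \norm{y}_\infty = R\}$ --- pick a uniformly random coordinate $j \in \{1,\dots,d\}$ and a uniform sign $s \in \pmo$, set $Y_j = sR$, and let the other $d-1$ coordinates be i.i.d.\ uniform on $[-R,R]$. This is correct because, conditioned on $\norm{Y}_\infty = r$, the density $e^{-\eta r}$ is constant over that $\ell_\infty$ sphere, so $Y$ is uniform on it, while the radial marginal is $\mathrm{Gamma}(d,\eta)$ as computed. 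Every step runs in time polynomial in $d$.
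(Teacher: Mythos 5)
Your proposal takes essentially the same route as the paper: the $\ell_\infty$-exponential-mechanism density $\propto e^{-(\varepsilon/\Delta)\|y\|_\infty}$, the density-ratio/reverse-triangle-inequality privacy argument, the observation that $\|Y\|_\infty$ is $\mathrm{Gamma}(d,\varepsilon/\Delta)$, and the MGF/Chernoff tail bound with the same optimizing choice of the dual parameter. The one genuine (if minor) divergence is in the sampling step: you sample the true radius $\|Y\|_\infty \sim \mathrm{Gamma}(d,\eta)$ and then draw $Y$ uniformly on the $\ell_\infty$ \emph{sphere} of that radius, whereas the paper samples $R \sim \mathrm{Gamma}(d+1,\cdot)$ (one extra shape parameter) and then draws $Y$ uniformly on the \emph{cube} $[-R,R]^d$; marginalizing $R$ in the latter reproduces the density because $\int_{\|y\|_\infty}^\infty e^{-\eta r}\,dr \propto e^{-\eta\|y\|_\infty}$. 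Both samplers are correct and efficient, so this is a stylistic difference, not a gap.

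One caution on the step you flagged yourself as the ``main obstacle.'' Carrying out your Chernoff optimization ($1-\lambda = d/t$ with $t = \varepsilon\alpha/\Delta$) gives
$$\pr{}{\|Y\|_\infty \ge \alpha} \le \left(\frac{t}{d}\right)^{d} e^{d-t} = \left(\frac{\varepsilon\alpha}{\Delta d}\right)^{d} e^{\,d - \varepsilon\alpha/\Delta},$$
whose prefactor is the \emph{reciprocal} of the $\left(\frac{\Delta d}{\varepsilon\alpha}\right)^d$ written in the theorem, so it does not ``match the one claimed'' as stated; the theorem's displayed formula appears to have the fraction inverted (one can check directly at $d=1$, $\Delta=\varepsilon=1$, $\alpha=10$ that the displayed expression falls below the true tail $e^{-10}$). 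Moreover, writing the correct bound as $(u e^{1-u})^d$ with $u := \varepsilon\alpha/(\Delta d)$, the hypothesis $d \le \varepsilon\alpha/2\Delta$ only guarantees $u \ge 2$, hence a bound of $(2/e)^d$, not $(2e)^{-d}$; obtaining $(2e)^{-d}$ requires roughly $u \ge 4$, i.e.\ a larger constant in the sample-complexity hypothesis. So your derivation is the right one, but you should not assert that the constants ``fall out'' --- they require a small correction to the statement rather than to your proof.
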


\begin{proof}
The distribution $\mathcal{D}$ is simply an instantiation of the exponential mechanism \cite{McSherryT07}. In particular, the probability density function is given by $$\text{pdf}_\mathcal{D}(y) \propto \exp\left(-\frac{\varepsilon}{\Delta}\norm{y}_\infty\right).$$
Formally, for every measurable $S \subseteq \mathbb{R}^d$, $$\pr{Y \sim \mathcal{D}}{Y \in S} = \frac{\int_{S} \exp\left(-\frac{\varepsilon}{\Delta}\norm{y}_\infty\right) \mathrm{d}y}{\int_{\mathbb{R}^d} \exp\left(-\frac{\varepsilon}{\Delta}\norm{y}_\infty\right) \mathrm{d}y}.$$
Firstly, this is clearly a well-defined distribution as long as $\varepsilon/\Delta>0$.

Privacy is easy to verify: It suffices to bound the ratio of the probability densities for the shifted distributions. For $x,x' \in \mathbb{R}^d$ with $\norm{x'-x}_\infty \leq \Delta$, by the triangle inequality, $$\frac{\text{pdf}_\mathcal{D}(x+y)}{\text{pdf}_\mathcal{D}(x'+y)} = \frac{\exp\left(-\frac{\varepsilon}{\Delta}\norm{x+y}_\infty\right)}{\exp\left(-\frac{\varepsilon}{\Delta}\norm{x'+y}_\infty\right)} = \exp\left(\frac{\varepsilon}{\Delta} \left( \norm{x'+y}_\infty - \norm{x+y}_\infty \right) \right) \leq \exp\left(\frac{\varepsilon}{\Delta} \norm{x'-x}_\infty  \right) \leq e^\varepsilon.$$ 

Define a distribution $\mathcal{D}^*$ on $[0,\infty)$ to by $Z \sim \mathcal{D}^*$ meaning $Z=\norm{Y}_\infty$ for $Y \sim \mathcal{D}$. To prove accuracy, we must give a tail bound on $\mathcal{D}^*$. The probability density function of $\mathcal{D}^*$ is given by $$\text{pdf}_{\mathcal{D}^*}(z) \propto z^{d-1} \cdot \exp\left(-\frac{\varepsilon}{\Delta}z\right),$$ which is obtained by integrating the probability density function of $\mathcal{D}$ over the infinity-ball of radius $z$, which has surface area $d2^dz^{d-1} \propto z^{d-1}$. Thus $\mathcal{D}^*$ is precisely the gamma distribution with shape $d$ and mean $d\Delta/\varepsilon$. 
The moment generating function is therefore $$\ex{Z \sim \mathcal{D}^*}{e^{tZ}} = \left(1-\frac{\Delta}{\varepsilon} t\right)^{-d}$$ for all $t < \varepsilon/\Delta$. By Markov's inequality $$\pr{Z \sim \mathcal{D}^*}{Z \geq \alpha} \leq \frac{\ex{Z \sim \mathcal{D}^*}{e^{tZ}}}{e^{t\alpha}} =  \left(1-\frac{\Delta}{\varepsilon} t\right)^{-d} e^{-t\alpha}.$$ Setting $t=\varepsilon/\Delta - d/\alpha$ gives the required bound.

It is easy to verify that $Y \sim \mathcal{D}$ can be sampled by first sampling a radius $R$ from a gamma distribution with shape $d+1$ and mean $(d+1)\Delta/\varepsilon$ and then sampling $Y \in [\pm R]^d$ uniformly at random. To sample $R$ we can set $R = \frac{\Delta}{\varepsilon}\sum_{i=0}^d \log U_i$, where each $U_i \in (0,1]$ is uniform and independent. This gives an algorithm (in the form of an explicit circuit) to sample $\mathcal{D}$ that uses only $O(d)$ real arithmetic operations, $d+1$ logarithms, and $2d+1$ independent uniform samples from $[0,1]$. 

\end{proof}

\subsection{Approximate Differential Privacy}

\jnote{Still extremely rough.  Applications of sparse vector are such a bitch to formalize.}

Our algorithm for approximate differential privacy makes use of a powerful tool from the literature \cite{DworkNRRV09,HardtR10,DworkNPR10,RothR10} called the sparse vector algorithm:

\begin{thm}[Sparse Vector] \label{thm:SV}
For every $c,k \geq 1$, $\varepsilon,\delta,\alpha,\beta>0$, and $$n \geq O \left(\frac{\sqrt{c\log(1/\delta)} \log(k/\beta)}{\alpha\varepsilon} \right),$$ there exists a mechanism $\mathit{SV}$ with the following properties.
\begin{itemize}
\item $\mathit{SV}$ takes as input a database $D \in \mathcal{X}^n$ and provides answers $a_1, \cdots, a_k \in [\pm 1]$ to $k$ (\emph{adaptive}) linear queries $q_1, \cdots, q_k : \mathcal{X} \to [\pm 1]$.
\item $\mathit{SV}$ is $(\varepsilon,\delta)$-differentially private.
\item Assuming $$\left|\left\{ j \in [k] : |q_j(D)| > \alpha/2 \right\}\right| \leq c,$$ we have $$\pr{\mathit{SV}}{\forall j \in [k] ~~ |a_j-q_j(D)| \leq \alpha} \geq 1-\beta.$$
\end{itemize}
\end{thm}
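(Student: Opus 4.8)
The plan is to instantiate $\mathit{SV}$ as a numeric-answer version of the classical AboveThreshold / sparse-vector algorithm \cite{DworkNRRV09,HardtR10,DworkNPR10,RothR10}, tuned with the right noise scale, and to deduce privacy from the AboveThreshold privacy lemma together with advanced composition, and accuracy from a union bound over the Laplace noise variables. First I would set up the mechanism. Each $q_j$ is a linear query on $D \in \mathcal{X}^n$, interpreted as the average $q_j(D) = \tfrac1n\sum_{i=1}^n q_j(D_i)$, so it has $\ell_\infty$-sensitivity $\Delta = 2/n$. Fix a per-test parameter $\varepsilon_0$ (to be chosen) and a threshold $T = 3\alpha/4$, and maintain a flag counter initialized to $0$. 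The mechanism draws a noisy threshold $\hat T = T + \mathrm{Lap}(O(\Delta/\varepsilon_0))$; then for $j = 1, \dots, k$: if the counter has reached $c$, output $a_j = 0$; otherwise draw fresh $\nu_j, \nu_j' \sim \mathrm{Lap}(O(\Delta/\varepsilon_0))$ and test whether $q_j(D) + \nu_j \ge \hat T$ or $-q_j(D) + \nu_j' \ge \hat T$ (a two-sided test to detect $|q_j(D)|$ large); if neither fires, output $a_j = 0$; if one fires, increment the counter, output $a_j = q_j(D) + \mathrm{Lap}(O(\Delta/\varepsilon_0))$, and refresh $\hat T$. The hard cap of $c$ flags is what makes the privacy bound hold unconditionally.

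For privacy, the core fact is the standard AboveThreshold lemma: a single run that processes queries until the first flag is $(\varepsilon_0,0)$-differentially private, and a fresh numeric answer released on a flagged query via Laplace noise of scale $O(\Delta/\varepsilon_0)$ is also $(\varepsilon_0,0)$-DP. The whole mechanism is an adaptive composition of at most $c$ such above-threshold runs and $c$ such Laplace releases, i.e.\ $O(c)$ mechanisms each $(\varepsilon_0,0)$-DP; adaptivity of the queries is harmless since differential privacy composes adaptively. By advanced composition the result is $(\varepsilon,\delta)$-DP with $\varepsilon = O(\varepsilon_0\sqrt{c\log(1/\delta)})$ for $\varepsilon_0 \le 1$, so I take $\varepsilon_0 = \Theta(\varepsilon/\sqrt{c\log(1/\delta)})$, which makes the Laplace scale $\sigma := O(\Delta/\varepsilon_0) = O(\sqrt{c\log(1/\delta)}/(n\varepsilon))$.

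For accuracy, condition on the event $E$ that every one of the $O(k)$ Laplace variables used (the $\nu_j,\nu_j'$, the threshold noises, and the numeric-answer noises) has magnitude at most $\alpha/8$. Each is $\mathrm{Lap}(\sigma)$ with $\sigma = O(\sqrt{c\log(1/\delta)}/(n\varepsilon))$, so taking $n \ge O(\sqrt{c\log(1/\delta)}\,\log(k/\beta)/(\alpha\varepsilon))$ makes $\Pr[\neg E] \le \beta$ by a union bound on Laplace tails. On $E$: (i) if neither test fires for query $j$ then $|q_j(D)| \le T + \alpha/4 \le \alpha$, so $a_j = 0$ has error $\le \alpha$; (ii) if query $j$ is flagged then $|q_j(D)| \ge T - \alpha/4 > \alpha/2$, so by the hypothesis that at most $c$ of the $q_j$ have $|q_j(D)| > \alpha/2$ the counter reaches $c$ only after all large queries have been seen, hence it never forces a premature $a_j = 0$, and each flagged $a_j = q_j(D) + \mathrm{Lap}(\sigma)$ has error $\le \alpha/8 \le \alpha$; (iii) combining, $|a_j - q_j(D)| \le \alpha$ for all $j$. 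This gives the claimed guarantee with probability $\ge 1-\beta$, and the bound on $n$ is exactly the stated one.

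The step I expect to be the main obstacle is making the accuracy argument fully rigorous under \emph{adaptive} queries, in particular the claim in (ii) that the realized number of flags is at most $c$: one must argue over the transcript produced on the good-noise event $E$, check that the hypothesis $|\{j : |q_j(D)| > \alpha/2\}| \le c$ is being applied to precisely the queries appearing in that transcript, and verify that the $c$-flag cap (inserted for the unconditional privacy proof) is never the binding constraint on $E$. This bookkeeping is routine but fiddly; the remaining pieces are a direct instantiation of the known sparse-vector and advanced-composition results.
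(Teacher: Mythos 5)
The paper does not give its own proof of Theorem~\ref{thm:SV}: it simply cites \cite{DworkR13} (Theorem~3.28, NumericSparse), together with the footnote saying to replace $\perp$-outputs or premature halting by outputting~$0$. Your sketch is a correct reconstruction of exactly that standard argument --- repeated AboveThreshold with a two-sided test and a hard $c$-flag cap to make the privacy claim unconditional, Laplace numeric releases on flagged queries, advanced composition at $\varepsilon_0 = \Theta\bigl(\varepsilon/\sqrt{c\log(1/\delta)}\bigr)$, and a union bound over the $O(k)$ Laplace tails for accuracy, including the $\perp\mapsto 0$ modification the footnote calls for --- so it matches the intended proof rather than taking a different route. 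The one step you flag as delicate is indeed where the bookkeeping lives: on the good-noise event you need every flagged query to satisfy $|q_j(D)|>\alpha/2$ \emph{strictly}, so that by hypothesis at most $c$ queries can ever be flagged and the cap is never the binding constraint; with $T=3\alpha/4$ and noise bound $\alpha/8$ you only get $|q_j(D)|\ge T-\alpha/4=\alpha/2$, so tighten the noise bound (say to $\alpha/16$) or raise $T$ slightly so that $T-\text{(noise budget)}>\alpha/2$. This is a constant-factor adjustment and the rest of the argument, including the handling of adaptivity, goes through as you describe.
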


A proof of this theorem can be found in \cite[Theorem 3.28]{DworkR13}.\footnote{Note that the algorithms in the literature are designed to sometimes output $\perp$ as an answer or halt prematurely.  To modify these algorithms into the form given by Theorem \ref{thm:SV} simply  output $0$ in these cases.}
We now describe our approximately differentially private mechanism.

\begin{figure}[ht] 
\begin{framed}
\begin{algorithmic}
\INDSTATE[0]{}
\INDSTATE[0]{Parameters: $\varepsilon,\delta>0$.}
\INDSTATE[0]{Input: $D \in \{\pm 1\}^{n \times d}$.}
\INDSTATE[0]{Let $$\sigma = 5\sqrt{d\log(1/\delta)}/\eps n \qquad \qquad \alpha = 8\sigma \sqrt{\log\log d}$$}
\INDSTATE[0]{For $j \in [d]$, let $\tilde{a}_j = \overline{D}_j + z_j$ where $z_j \sim N(0, \sigma^2)$.}
\INDSTATE[0]{Instantiate $\mathit{SV}$ from Theorem \ref{thm:SV} with parameters 
$$
c_\mathit{SV}=2d/\log^8 d \qquad k_\mathit{SV}=d \qquad \varepsilon_\mathit{SV} = \varepsilon/2 \qquad \delta_\mathit{SV} = \delta/2
$$
$$
\alpha_\mathit{SV} = \alpha/2 \qquad\beta_\mathit{SV}=e^{-\log^4 d}
$$}
\INDSTATE[0]{For $j \in [d]$, define $q_j : \{\pm 1\}^d \to [\pm 1]$ by $q_j(x) = (x_j-\tilde{a}_j)/2$.}
\INDSTATE[0]{Let $\hat{a}_1, \cdots, \hat{a}_d$ be the answers to $q_1, \cdots, q_d$ given by $\mathit{SV}$.}
\INDSTATE[0]{For $j \in [d]$, let $a_j = \tilde{a}_j + 2\hat{a}_j$.}
\INDSTATE[0]{Output $a_1, \cdots, a_d$.}
\end{algorithmic}
\end{framed}
\vspace{-6mm}
\caption{Approximately DP mechanism $M: \{\pm 1\}^{n \times d} \to [\pm 1]^d$}
\end{figure}

\begin{proof}[Proof of Theorem \ref{thm-intro:LinfApprox}.]


Firstly, we consider the privacy of $M$: $\tilde{a}$ is the output of the Gaussian mechanism with parameters to ensure that it is a $(\varepsilon/2,\delta/2)$-differentially private function of $D$. Likewise $\hat{a}$ is the output of $\mathit{SV}$ with parameters to ensure that it is also a $(\varepsilon/2,\delta/2)$-differentially private function of $D$. Since the output is $\tilde{a}+2\hat{a}$, composition implies that $\mathcal{M}$ as a whole is $(\varepsilon,\delta)$-differentially private, as required.

Now we must prove accuracy. Suppose that $|\hat{a}_j - q_j(D)| \leq \alpha_\mathit{SV} = \alpha/2$ for all $j \in [d]$. Then
\begin{align*}
 |a_j - \overline{D}_j| =& |\tilde{a}_j + 2\hat{a}_j - \overline{D}_j|\\
=& |\tilde{a}_j -\overline{D}_j+ 2(q_j(D) + (\hat{a}_j - q_j(D)))|\\
\leq& |\tilde{a}_j -\overline{D}_j+ 2q_j(D)| + 2|\hat{a}_j - q_j(D))|\\
\leq& |\tilde{a}_j -\overline{D}_j+ (\overline{D}-\tilde{a}_j)| + 2\alpha_\mathit{SV}\\
=& \alpha,
\end{align*}
as required. So we need only show that $|\hat{a}_j - q_j(D)| \leq \alpha_\mathit{SV} $ for all $j \in [d]$, which sparse vector guarantees will happen with probability at least $1-\beta_\mathit{SV}$ as long as \begin{equation} \label{eqn:Sparse} \left|\left\{ j \in [d] : |q_j(D)| > \alpha_\mathit{SV}/2 \right\}\right| \leq c_\mathit{SV}.\end{equation} Now we verify that \eqref{eqn:Sparse} holds with high probability.

By our setting of parameters, we have $q_j(D) = -z_j/2$. This means $$\pr{}{|q_j(D)| > \alpha_\mathit{SV}/2} = \pr{}{|z_j| > \alpha/2} \leq e^{-\alpha^2/8\sigma^2} = \frac{1}{\log^8 d}.$$
Let $E_j \in \{0,1\}$ be the indicator of the event $|q_j(D)| > \alpha_\mathit{SV}/2$. Since the $z_j$s are independent, so are the $E_j$s. Thus we can apply a Chernoff bound: \begin{equation} \label{eqn:SparseWHP} \pr{}{\left|\left\{ j \in [d] : |q_j(D)| > \alpha_\mathit{SV}/2 \right\}\right| > c_\mathit{SV}} = \pr{}{\sum_{j \in [d]} E_j > \frac{2d}{\log^8 d}} \leq e^{-2d/\log^{16} d}.\end{equation}
The failure probability of $M$ is bounded by the failure probability of $\mathit{SV}$ plus \eqref{eqn:SparseWHP}, which is dominated by $\beta_\mathit{SV} = \exp(-\log^4 d)$.

Finally we consider the sample complexity. The accuracy is bounded by 
$$\alpha \leq \frac{40 \sqrt{d \cdot \log(1/\delta) \cdot \log \log d}}{\varepsilon n},$$ which rearranges to $$n \geq \frac{40 \sqrt{d \cdot \log(1/\delta) \cdot \log \log d} }{\alpha \varepsilon}.$$ Theorem \ref{thm:SV} requires $$n \geq O\left(\frac{\sqrt{c_\mathit{SV} \log(1/\delta)} \log(d/\beta_\mathit{SV})}{\alpha\varepsilon}\right) = O\left(\frac{\sqrt{d \log(1/\delta)}}{\alpha\varepsilon}\right)$$ for sparse vector to work, which is also satisfied.
\end{proof}
We remark that we have not attempted to optimize the constant factors in this analysis.

\addcontentsline{toc}{section}{References}
\bibliographystyle{alpha}
\bibliography{references}

\appendix

\section{Alternative Lower Bound for Pure Differential Privacy}

It is known \cite{HardtT10} that any $\varepsilon$-differentially private mechanism that answers $d$ one-way marginals requires $n \geq \Omega(d/\varepsilon)$ samples. Our techniques yield an alternative simple proof of this fact.

\begin{thm} \label{thm:Packing}
Let $M : \{\pm 1\}^{n \times d} \to [\pm 1]^d$ be a $\varepsilon$-differentially private mechanism. Suppose $$\forall D \in \{\pm 1\}^{n \times d}~~~\ex{M}{\norm{M(D)-\overline{D}}_1} \leq 0.9d$$ Then $n \geq \Omega(d/\varepsilon)$.
\end{thm}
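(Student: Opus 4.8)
The plan is a packing argument whose only privacy ingredient is group privacy (Fact~\ref{fact:GroupDP}), exactly as in the main lower bound, but with the fingerprinting code replaced by an elementary counting bound. For each $w \in \pmo^d$ let $D_w \in \pmo^{n \times d}$ be the database all of whose $n$ rows equal $w$, so that $\overline{D_w} = w$. First I would apply Markov's inequality to the accuracy hypothesis: since $\ex{M}{\norm{M(D_w) - w}_1} \le 0.9 d$, we get $\pr{}{\norm{M(D_w) - w}_1 \le 0.95 d} \ge 1 - 0.9/0.95 = 1/19$ for every $w$. Writing $B_w := \{ y \in [\pm 1]^d : \norm{y - w}_1 \le 0.95 d \}$, this says $M(D_w) \in B_w$ with probability at least $1/19$. (Any fixed threshold $\tau d$ with $0.9 < \tau < 1$ works here; I fix $\tau = 0.95$ for concreteness, and the hypothesis constant being strictly below $1$ is exactly what makes this Markov step yield a positive probability.)

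Next I would bring in group privacy. Any two databases with $n$ rows differ in at most $n$ rows, so fixing the reference $w_0 = (1,\dots,1)$, Fact~\ref{fact:GroupDP} with $k = n$ and $\delta = 0$ gives $\pr{}{M(D_w) \in B_w} \le e^{n\eps}\, \pr{}{M(D_{w_0}) \in B_w}$ for every $w \in \pmo^d$. Summing over all $2^d$ vectors $w$: the left-hand side is at least $2^d/19$, while the right-hand side equals $e^{n\eps} \cdot \ex{y \sim M(D_{w_0})}{\#\{ w \in \pmo^d : y \in B_w \}}$. So the whole proof reduces to showing that any single point $y \in [\pm 1]^d$ lies in only an $e^{-\Omega(d)}$ fraction of the balls $B_w$ --- a weak, averaged form of packing that suffices even though the individual balls heavily overlap.

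For the counting step I would use the identity $\norm{y - w}_1 = d - \langle w, y \rangle$, valid since $w \in \pmo^d$ and $y \in [\pm 1]^d$, so that $y \in B_w$ is equivalent to $\langle w, y \rangle \ge 0.05 d$. For $w$ drawn uniformly from $\pmo^d$, the quantity $\langle w, y \rangle = \sum_j w_j y_j$ has mean zero and, by Hoeffding's inequality together with $\norm{y}_2^2 \le d$, satisfies $\pr{w}{\langle w, y \rangle \ge 0.05 d} \le \exp\!\big( -(0.05 d)^2/(2\norm{y}_2^2) \big) \le e^{-d/800}$. Hence $\#\{ w \in \pmo^d : y \in B_w \} \le 2^d e^{-d/800}$ for every $y \in [\pm 1]^d$. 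Combining with the previous paragraph, $2^d/19 \le e^{n\eps} \cdot 2^d e^{-d/800}$, which rearranges to $n\eps \ge d/800 - \ln 19$, i.e.\ $n \ge \Omega(d/\eps)$ (the additive $O(1/\eps)$ term is absorbed into the constant once $d$ exceeds an absolute constant, and for $d$ below that constant the claim is the classical single-query bound).

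I expect the only step with any content to be the counting/concentration bound showing each output point is $\ell_1$-close to just an $e^{-\Omega(d)}$ fraction of the planted marginals $w$; Markov, the definition of group privacy, and the summation are routine. One point to keep in mind is that the argument uses that $M$ outputs into $[\pm 1]^d$ (so that $\norm{y}_2^2 \le d$ in the Hoeffding bound); if one allowed arbitrary real outputs one would first post-process by clipping each coordinate to $[-1,1]$, which only decreases the $\ell_1$ error to $\overline D \in [\pm 1]^d$ and preserves privacy.
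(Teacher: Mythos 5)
Your proof is correct and takes essentially the same route as the paper's: Markov's inequality for accuracy, Hoeffding's inequality via the $\ell_1$-to-inner-product correspondence (you use the exact identity $\norm{y-w}_1 = d - \langle w, y\rangle$; the paper uses the one-sided bound $\langle a, x\rangle \le \norm{a}_1$), and group privacy over $n$-row changes, arriving at the same $e^{n\eps} e^{-d/800}$ comparison. The only cosmetic difference is that you sum over all $2^d$ planted databases $D_w$ against a fixed reference $D_{w_0}$, whereas the paper compares two independently sampled databases $D_x$ and $D_{x'}$; these are equivalent reformulations of the same packing argument.
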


The proof uses a special case of Hoeffding's Inequality:
\begin{lem}[Hoeffding's Inequality] \label{lem:HoeffdingInequality}
Let $X \in \{\pm 1\}^n$ be uniformly random and $a \in \mathbb{R}^n$ fixed. Then $$\pr{X}{ \langle a, X \rangle > \lambda \norm{a}_2} \leq e^{-\lambda^2/2}$$ for all $\lambda \geq 0$.
\end{lem}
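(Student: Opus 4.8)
The plan is to use the standard exponential-moment (Chernoff) method. First I would dispose of the trivial cases: if $a = 0$ then $\langle a, X\rangle = 0$ and the left-hand side is $\pr{X}{0 > 0} = 0$; and if $\lambda = 0$ the right-hand side is $1$, so the bound holds vacuously. Hence assume $a \neq 0$ and $\lambda > 0$, and write $s = \lambda \norm{a}_2 > 0$, so the goal becomes $\pr{X}{\langle a, X\rangle > s} \leq e^{-s^2 / 2\norm{a}_2^2}$.

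The key step is to bound the moment generating function of $\langle a, X\rangle = \sum_{i=1}^n a_i X_i$. Since the coordinates $X_i$ are independent and uniform on $\pmo$, for any $t \in \mathbb{R}$ we have $\ex{X}{e^{t \langle a, X\rangle}} = \prod_{i=1}^n \ex{X_i}{e^{t a_i X_i}} = \prod_{i=1}^n \cosh(t a_i)$. The heart of the argument is the inequality $\cosh(x) \leq e^{x^2/2}$ for all real $x$, which I would prove by comparing Taylor coefficients: $\cosh(x) = \sum_{k \geq 0} x^{2k}/(2k)!$ while $e^{x^2/2} = \sum_{k \geq 0} x^{2k}/(2^k k!)$, and $(2k)! \geq 2^k k!$ term by term. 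This yields $\ex{X}{e^{t \langle a, X\rangle}} \leq \prod_{i=1}^n e^{t^2 a_i^2 / 2} = e^{t^2 \norm{a}_2^2 / 2}$.

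Then I would apply Markov's inequality to the nonnegative random variable $e^{t \langle a, X\rangle}$ (for $t > 0$): $\pr{X}{\langle a, X\rangle > s} = \pr{X}{e^{t\langle a, X\rangle} > e^{ts}} \leq e^{-ts} \ex{X}{e^{t\langle a, X\rangle}} \leq e^{-ts + t^2 \norm{a}_2^2/2}$. Optimizing the exponent over $t > 0$, the choice $t = s/\norm{a}_2^2$ gives exponent $-s^2/2\norm{a}_2^2$, so $\pr{X}{\langle a, X\rangle > s} \leq e^{-s^2/2\norm{a}_2^2}$. Substituting back $s = \lambda\norm{a}_2$ yields $\pr{X}{\langle a, X\rangle > \lambda \norm{a}_2} \leq e^{-\lambda^2/2}$, as claimed.

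I do not expect any serious obstacle here; the only mildly delicate point is the elementary inequality $\cosh(x) \leq e^{x^2/2}$, and even that is immediate from the term-by-term Taylor comparison above. Everything else is independence, Markov, and a one-variable optimization.
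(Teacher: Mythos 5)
Your proof is correct. The paper itself states this lemma without proof, treating it as a well-known fact, so there is no in-paper argument to compare against; what you have written is precisely the standard Chernoff/moment-generating-function derivation (MGF of a Rademacher sum factors into a product of $\cosh(t a_i)$ terms, the elementary bound $\cosh(x) \leq e^{x^2/2}$ via term-by-term comparison of Taylor coefficients using $(2k)! \geq 2^k k!$, Markov's inequality applied to $e^{t\langle a,X\rangle}$, and the optimal choice $t = s/\norm{a}_2^2$). All steps check out, including the edge cases $a=0$ and $\lambda=0$, so this is exactly the argument one would insert if the paper chose to make the lemma self-contained.
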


\begin{proof}[Proof of Theorem \ref{thm:Packing}]
Let $x,x' \in \{\pm 1\}^d$ be independent and uniform. Let $D \in \{\pm 1\}^{n \times d}$ be $n$ copies of $x$ and, likewise, let $D' \in \{\pm 1\}^{n \times d}$ be $n$ copies of $x'$.
Let $Z = \langle M(D), x \rangle$ and $Z' = \langle M(D'), x \rangle$.

Now we give conflicting tail bounds for $Z$ and $Z'$, which we can relate by privacy.

By our hypothesis and Markov's inequality, 
\begin{align*}
\pr{}{Z \leq d/20} =& \pr{}{\langle M(D), x \rangle \leq 0.05d}\\
=& \pr{}{\langle \overline{D}, x \rangle - \langle \overline{D}-M(D), x \rangle \leq 0.05d}\\
=& \pr{}{\langle \overline{D}-M(D), x \rangle \geq 0.95d}\\
\leq& \pr{}{\norm{\overline{D}-M(D)}_1 \geq 0.95d}\\
\leq& \frac{\ex{}{\norm{\overline{D}-M(D)}_1}}{0.95d}\\
\leq& \frac{0.9}{0.95} < 0.95.
\end{align*}

Since $M(D')$ is independent from $x$, we have $$\forall \lambda \geq 0 ~~~\pr{}{Z' > \lambda \sqrt{d}} \leq \pr{}{\langle M(D'), x \rangle > \lambda \norm{M(D')}_2} \leq e^{-\lambda^2/2},$$ by Lemma \ref{lem:HoeffdingInequality}. In particular, setting $\lambda = \sqrt{d}/20$ gives $\pr{}{Z' > d/20} \leq e^{-d/800}$.

Now $D$ and $D'$ are databases that differ in $n$ rows, so privacy implies that $$\pr{}{M(D) \in S} \leq e^{n\varepsilon} \pr{}{M(D') \in S}$$ for all $S$. Thus $$\frac{1}{20} < \pr{}{Z > \frac{d}{20}} = \pr{}{M(D) \in S_x} \leq e^{n\varepsilon} \pr{}{M(D') \in S_x} = e^{n\varepsilon} \pr{}{Z' > \frac{d}{20}} \leq e^{n\varepsilon} e^{-d/800},$$ where $$S_x = \left\{ y \in [\pm 1]^d : \langle y, x \rangle > \frac{d}{20} \right\}.$$ Rearranging $1/20 < e^{n \varepsilon} e^{-d/800}$, gives $$n > \frac{d}{800\varepsilon} - \frac{\log(20)}{\varepsilon},$$
as required.
\end{proof}

\end{document}